\documentclass[11pt]{article}

\usepackage{graphicx}
\usepackage{float}
\usepackage{lmodern}
\usepackage{booktabs}
\usepackage{amsmath,amsfonts,bbm,bm,mathtools,relsize,exscale}
\usepackage{dsfont}
\usepackage[OT1]{fontenc}
\usepackage[utf8]{inputenc}
\usepackage[linesnumbered,ruled,vlined]{algorithm2e}
\usepackage[inline,shortlabels]{enumitem}
\graphicspath{ {figs/} }
\usepackage{listings}
\usepackage{upquote}
\usepackage{cancel}

\usepackage{refcount,nameref,zref-xr,zref-user} % nameref before zref-xr
\zxrsetup{toltxlabel} % allows use of the \ref command as usual
\usepackage{url}

\let\href\undefined
\makeatletter
\@ifpackageloaded{dgruyter}{
  \usepackage{amsthm}
  \usepackage[round]{natbib}

  \newtheorem{theorem}{Theorem}
  \newtheorem{lemma}{Lemma}
  \newtheorem{corollary}{Corollary}
  \newtheorem{remark}{Remark}
  \newtheorem{definition}{Definition}
  {\theoremstyle{definition}\newtheorem{assumption}{}}
  {\theoremstyle{definition}\newtheorem{condition}{}}
  {\theoremstyle{definition}\newtheorem{example}{Example}[section]}
}{%
  \usepackage{arxiv}
  \usepackage{standalone}
  \usepackage{multirow}
  \usepackage{setspace}
  \usepackage{amsthm}
  \usepackage[round]{natbib}

  \newtheorem{theorem}{Theorem}
  \newtheorem{lemma}{Lemma}
  \newtheorem{corollary}{Corollary}

  {\theoremstyle{definition}}
  {\theoremstyle{definition}}
  {\theoremstyle{definition}}

  \newcommand{\authorlist}{
    Salvador V. Balkus \\
    Department of Biostatistics,\\
    Harvard Chan School of Public Health,\\
    \texttt{sbalkus@g.harvard.edu}\\
    \And
    Christian Testa \\
    Department of Biostatistics,\\
    Harvard Chan School of Public Health,\\
    \texttt{ctesta@g.harvard.edu}\\
    \And
    Nima S.~Hejazi \\
    Department of Biostatistics,\\
    Harvard Chan School of Public Health,\\
    \texttt{nhejazi@hsph.harvard.edu}\\
  }
}
\makeatother

\newcommand{\titlepaper}{A Riesz Representer Perspective on Targeted Learning}

\newcommand{\abstractpaper}{
As research in causal inference has sought to address more complex scientific
questions, the number of specialized estimands in the field has proliferated.
Recognition that many of these estimands share a common linear form has generated
interest in simplifying estimation procedures using Riesz representers. In this
work, we construct a targeted minimum loss-based estimation procedure for
nested linear functionals, leveraging Riesz representers of a general recursive
form. The proposed method unifies asymptotically efficient estimation for a
variety of statistical estimands that originate in causal inference, including
the effects of time-varying treatments under treatment-confounder feedback and
direct and indirect effects from causal mediation analysis. We demonstrate how
our proposal reduces the need for laborious and technically challenging
mathematical derivations when constructing estimators of common statistical
estimands under complex forms of censoring and sampling. We investigate and
validate the properties of the proposed procedures in numerical experiments,
discuss open-source software facilitating their implementation, and illustrate
their application in a re-analysis of data from an HIV vaccine efficacy trial.
}

\AtEndDocument{\refstepcounter{theorem}\label{finalthm}}
\AtEndDocument{\refstepcounter{equation}\label{finaleq}}
\AtEndDocument{\refstepcounter{lemma}\label{finallemma}}

\newcommand{\E}{\mathsf{E}}
\renewcommand{\P}{\mathsf{P}}

\newcommand{\M}{\mathcal{M}}
\newcommand{\R}{\mathbb{R}}

\renewcommand{\Pr}{\mathbb{P}}
\renewcommand{\P}{\mathsf{P}}
\newcommand{\1}{\mathbbm{1}}

\newcommand{\link}{\text{link}}

\usepackage[dvipsnames]{xcolor}

\zexternaldocument*[supp:]{sm}

\makeatletter
\@ifpackageloaded{dgruyter}{}{
  \author{\authorlist}
  \date{\today}
  \title{\titlepaper}
}
\makeatother

\begin{document}

\makeatletter
\@ifpackageloaded{dgruyter}{
  \articletype{...}
  \author[1]{Salvador V.~Balkus}
  \author[1]{Christian Testa}
  \author*[2]{Nima S.~Hejazi}
  \runningauthor{S.V.~Balkus, C.~Testa, and N.S.~Hejazi}
  \affil[1]{Department of Biostatistics,
    Harvard T.H.~Chan School of Public Health;
    655 Huntington Ave., Boston, MA 02115}
  \affil[2]{Department of Biostatistics,
    Harvard T.H.~Chan School of Public Health;
    e-mail: \texttt{nhejazi@hsph.harvard.edu}}
  \title{\titlepaper}
  \runningtitle{\titlepaper}
  \subtitle{...}
  \abstract{\abstractpaper}
  \keywords{causal inference, machine learning, de-biased estimation,
    efficient influence function, TMLE}
  \classification[PACS]{...}
  \communicated{...}
  \dedication{...}
  \received{...}
  \accepted{...}
  \journalname{...}
  \journalyear{...}
  \journalvolume{...}
  \journalissue{...}
  \startpage{1}
  \aop
  \DOI{...}
}{}
\maketitle
\@ifpackageloaded{dgruyter}{}{
\begin{abstract}
    \abstractpaper
\end{abstract}
}
\makeatother

%%%%%%%%%%%%%%%%%%%%%%%%%%%%%%%%%%%%%%%%%%%%%%%%%%%%%%%%%%%%%%%%%%%%%%%%%%%%%%%
\section{Introduction}\label{intro}

Contemporaneous advances at the intersection of causal inference and machine
learning in recent decades have led to a rich and thriving research interface
between the two fields. Significant effort in causal inference has been devoted
to developing interpretable, scientifically grounded estimands and outlining
the identification assumptions necessary to re-express counterfactual queries
in terms of quantities estimable from observed data~\citep{pearl2000causality,
van2011targeted, hernan2024}. Simultaneously, the library of machine learning
algorithms for flexible regression has grown
considerably~\citep{friedman2001elements}.

Despite these dual advances, estimation and uncertainty quantification remained
siloed, for a time, between causal inference and machine learning. The use of
semi-parametric theory to bridge these areas~\citep{bickel1993efficient,
vdl2003unified}---in particular the development of theory and methods to
correct asymptotic bias arising from the use of flexible regression procedures
for nuisance function estimation---helped give rise to formal frameworks for
de-biased estimation. Among these frameworks, \textit{targeted
learning}~\citep{van2006targeted, van2011targeted} has outlined a general,
structured template for constructing asymptotically efficient substitution
estimators robust to model misspecification, and, as such, has helped to drive
an explosion of interest in this interface, recently termed \textit{causal
machine learning}.

Causal machine learning seeks to unify core principles of causal inference with
advances in machine learning, using the former to derive target statistical
estimands and the latter to facilitate flexible estimation of relevant
components of the data-generating process (i.e., nuisance functions) relevant
to estimation of the target estimand. This use of flexible learning algorithms
for nuisance estimation can help to curb an often severe bias---from model
misspecification---in the final estimator, yet it comes at a cost: the
resultant estimator incurs an asymptotic bias stemming from a mismatch between
the typically slower convergence rates of machine learning algorithms and the
comparatively faster rate attained by more commonly used parametric regression
strategies.

To fix ideas, consider, for example, the statistical functional $\Psi(\P_0) =
\E_{\P_0}(\E_{\P_0}(Y \mid A = a, L))$, which, under well-studied
identification assumptions~\citep{hernan2024}, coincides with the
counterfactual mean $\E[Y^a]$ of an outcome $Y$ where all study units receive a
specific level $a$ of treatment $A$. Here, $Y^a$ is the potential
outcome~\citep{rubin1974estimating, neyman1990application} of $Y$ under
treatment level $a \in \mathcal{A}$, and we use $\P_0$ to denote the
data-generating law of the data unit $O$, where we assume that $\P_0 \in \M$,
for $\M$ an unrestricted statistical model (that is, a collection of candidate
data-generating laws $\P$); the statistical parameter then is a mapping $\Psi:
\M \to \R^d$. We let $\overline{Q}_{\P_0}(A, L) \coloneqq \E_{\P_0}(Y \mid A,
L)$ denote the mean of the outcome $Y$ conditional on treatment $A$ and
putative confounders $L$, and we will, going forward, abbreviate the nuisance
function evaluated at the true distribution $\overline{Q}_{\P_0}$ as
$\overline{Q}_0$, and the nuisance at an arbitrary distribution
$\overline{Q}_{\P}$ as $\overline{Q}$. We note that $\psi_0 \equiv \Psi(\P_0) =
\Psi(\overline{Q}_{\P_0}) = \E_{\P_0}(\overline{Q}_{\P_0}(a, L))$, clarifying
that the parameter mapping $\Psi(\cdot)$ depends on only a specific component
$\overline{Q}_{\P_0}$ of the data-generating law $\P_0$.

Construction of an estimator $\psi_n$ of $\psi_0 = \Psi(\P_0)$ reduces to
construction of an estimator $\overline{Q}_{n}  = \hat{\E}_{\P}(Y \mid A, L)$
of $\overline{Q}_0$, the nuisance function appearing in the parameter
mapping. Learning $\overline{Q}_0$ can be accomplished by a variety of
means, among them parametric regression~\citep{mccullagh1989generalized};
semi-parametric regression, e.g., generalized additive
models~\citep{hastie1987generalized, hastie1990generalized}; and machine
learning, e.g., decision trees and forests~\citep{breiman1984classification,
breiman2001random}, gradient boosting machines~\citep{friedman2001greedy},
neural networks~\citep{ripley1996pattern}, and stacked regression or ensemble
models~\citep{wolpert1992stacked, breiman1996stacked, vdl2007super}. How the
estimator $\overline{Q}_{n}$ of $\overline{Q}_0$ is constructed impacts the
downstream properties of the estimator $\psi_n = n^{-1} \sum_{i=1}^n
\overline{Q}_{n}(a, L)$; moreover, approaches for constructing the nuisance
estimator $\overline{Q}_{n}$---which strive to obtain optimal estimators of the
nuisance function $\overline{Q}_0$, e.g., by appealing to the asymptotic
optimality of cross-validated loss-based estimation~\citep{vdl2004asymptotic,
dudoit2005asymptotics, vdvaart2006oracle}---are not generally optimally suited
for use in constructing the estimator $\psi_n$. In particular, the use of
data-adaptive regression strategies necessitates the negotiation of a
misspecification--convergence tradeoff: the estimator $\overline{Q}_{n}$ is
less likely to be misspecified, but it will attain a slower convergence rate,
leaving the estimator $\psi_n$ with a bias that does not vanish asymptotically.

To enable the use of machine learning in such statistical estimation tasks, a
great deal of research in recent years has centered on developing techniques to
remove such asymptotic bias. Targeted maximum likelihood estimation (TMLE), or
targeted minimum loss-based estimation, introduced by \cite{van2006targeted},
is a general template for de-biasing estimators through an updating procedure
generically described as ``targeting,'' which recognizes that the bias incurred
is proportional to how well (or poorly) an estimating equation based on the
efficient influence function (EIF), a key object in semi-parametric efficiency
theory~\citep{bickel1993efficient}, is solved. By fluctuating the nuisance
estimator $\overline{Q}_{n}$ in a manner that solves the EIF estimating
equation, a targeting procedure produces an updated estimator
$\overline{Q}_{n}^{\star}$ that results in a consistent and, under standard
regularity conditions, asymptotically efficient estimator $\psi_n^{\star} =
n^{-1} \sum_{i=1}^n \overline{Q}_{n}^{\star}(a, L)$ of $\psi_0$; the resultant
substitution estimator is also, under certain regularity
conditions~\citep{van2011targeted}, asymptotically efficient, achieving the
lowest possible variance among the class of regular asymptotically linear (RAL)
estimators of the target estimand $\psi_0$. This latter property is inherited
from the EIF, which characterizes the semi-parametric efficiency bound in
large, non-parametric statistical models; recent reviews of the relevant
semi-parametric theory~\citep{kennedy2016semiparametric, hines2022demystifying,
kennedy2022semiparametric} provide detailed discussions.

Broadly, derivation of a TMLE algorithm consists of three steps: (1) obtain the
unique EIF in the non-parametric model $\M$ for RAL estimators of the target
estimand; (2) set up a maximum likelihood (or loss-based minimization) problem
with parameter(s) $\varepsilon$, such that the derivative of the loss with
respect to $\varepsilon$ equals the EIF; and (3) update the initial estimator
$\overline{Q}_{n}$ of the nuisance function $\overline{Q}_0$ by solving the
loss minimization problem for $\hat{\varepsilon}$. By nature of its
construction, this final step fluctuates the initial estimator in a direction
(in model space) that results in an updated estimator that better solves the
EIF estimating equation than its counterpart obtained without reference to the
EIF.

The generality of the TMLE framework leaves many details to the investigator.
For any given target estimand, one must determine the unique EIF of estimators
in the RAL class. In some instances, such calculations can be labor-intensive
and require specialized expertise in semi-parametric theory. Though accessible
guidance on the nature of such calculations has become available in more recent
years~\citep{hines2022demystifying, kennedy2022semiparametric}, the challenge
of deriving new estimation procedures is often commensurate to the complexity
of the problem at hand. Further, although TMLE procedures for target estimands
of common interest are widely available, including for the statistical
functionals that identify the counterfactual mean under
treatment~\citep{Gruber2010} (and related contrasts such as the average
treatment effect) or the effects of time-varying
treatments~\citep{Stitelman2011, vanderLaan2012, Diaz2021}, it remains
challenging for investigators without expertise in semi-parametric theory to
derive TMLE algorithms for novel estimands or for well-studied estimands in
settings subject to complex forms of censoring.

Concurrent developments in causal machine learning have formulated alternative
representations for the EIFs of common classes of target estimands. Consider
statistical estimands of the form $\Psi(\P_0) \equiv \Psi(\eta_0) =
\E_{\P_0}[h(O; \eta_0)]$, where $\eta_0 \coloneqq \eta_{\P_0}$ are nuisance
functions that depend on $\P_0$ (e.g., $\overline{Q}_0$) and $h$ is a fixed,
linear functional of $\eta$; here, we stress our choice of notation $\Psi(\P_0)
\equiv \Psi(\eta_{\P_0})$ to indicate that the target parameter depends on
$\P_0$ only through the nuisance functions $\eta_0$. Among recent
developments,~\cite{Hirshberg2021} and~\cite{Chernozhukov2022}, for example,
recognized that when the target estimand admits the structure above, the EIF
$\phi$ then takes the form
\begin{equation}
    \phi(\P)(O) = h(X; \eta) + \alpha(X)(Y - \eta(X)) - \Psi(\eta) \ ,
\label{eqn:riesz-rep}
\end{equation}
where $X$ denotes all variables in $O$ upon which the nuisance function $\eta$
depends and $h(\cdot)$ is a non-stochastic transformation of the nuisance
function $\eta$. Here, $\alpha$ is called a \textit{Riesz representer} and is
analogous to inverse probability weights~\citep{horvitz1952generalization} and
balancing weights~\citep{Zubizarreta2015} commonly used for estimation in
causal inference and missing data problems. Such a representation is motivated
by the convenience of estimating $\alpha$ using specialized machine learning
algorithms~\citep{chernozhukov2022riesznet} or de-biasing well-known regression
algorithms~\citep{BrunsSmith2025}. However, as~\cite{Williams2025} argue based
on the work of~\cite{newey1994asymptotic}, this representation simplifies the
process of deriving an EIF, especially in settings where the estimand is
complex, such as the iterated conditional expectation functionals that arise in
longitudinal data problems subject to time-varying treatment-confounder
feedback~\citep{hernan2024, vdl2018targeted}.

This general class of expressions now forms the basis of the \textit{Riesz
regression} framework~\citep{Williams2025}. While many of the above works focus
on direct estimation of the EIF, none, to our knowledge, construct a TMLE. We
argue that the TMLE framework can be adapted to yield simpler algorithms for
estimator construction when the target estimand admits a representation
compatible with Riesz regression.

\textbf{Our contributions}. We outline how the Riesz representation
theorem can be used to construct a TMLE procedure for a broad class
of statistical estimands. Generalizing previous developments in the Riesz
learning framework, alongside findings in longitudinal causal inference, we
derive a common EIF for a sequence of nested linear functionals that depend on
arbitrary nuisance functions. Using this EIF, we provide a TMLE that fluctuates
each nuisance parameter by employing its corresponding Riesz representer as a
``clever covariate''. We implement our procedures in a software package,
\texttt{\{RieszCML\}}, unifying estimation across causal inference settings
ranging from longitudinal data to mediation.

\textbf{Outline}. The remainder of the manuscript is organized as follows.
Section~\ref{sec:linear-functional}
describes the de-biasing properties of the Riesz representer $\alpha$, when it
exists. Next, Section~\ref{sec:eif}
provides a
recursively applicable ``Riesz EIF'', generalizing previous works in which the
parameter mapping $\Psi(\P_0)$ depends on only a single conditional expectation
as nuisance function. Section~\ref{sec:tmle} describes
a TMLE algorithm for this nested setting, and Section~\ref{sec:examples}
demonstrates how it can be applied to develop estimators
of a few commonly studied estimands. Finally, Section~\ref{sec:simulation}
presents simulation experiments demonstrating
the use of this TMLE with Riesz regression for estimation in the longitudinal
regime, and Section~\ref{sec:data-analysis}
illustrates use of the proposed methodology by its application to an immune
correlates analysis of data from the HVTN 505 vaccine efficacy trial
(NCT00865566). We conclude in Section~\ref{sec:discussion}
with a discussion of productive
avenues for future investigation.

%%%%%%%%%%%%%%%%%%%%%%%%%%%%%%%%%%%%%%%%%%%%%%%%%%%%%%%%%%%%%%%%%%%%%%%%%%%%%%%
\section{Mathematical preliminaries and notation}\label{sec:linear-functional}

\subsection{Observed data and statistical model}

Throughout, we consider data on $n$ study units, $O_1, \ldots, O_n$, where the
data on a given study unit is represented by the random variable $O$. We allow
details of the hypothetical data-generating study to vary across examples, and,
accordingly, allow constituent random variables contained within $O$ to vary
across study contexts. Following standard conventions in causal inference, $O$
is composed of a sequence of temporally ordered observations of several random
variables. In a study with treatment occurring at a single time point, we have
$O = (L, A, Y)$, with baseline confounders $L$, treatment $A$, and outcome $Y$.
Meanwhile, in a study with time-varying confounders and treatment, we have $O =
(L_0, L_1, A_1, \ldots, L_T, A_T, Y)$, where, for $t = 1, \ldots, T$ time
points, $L_t$ are time-varying confounders and $A_t$ are time-varying
treatments, with $L_0$ additionally denoting time-fixed confounders (e.g.,
sex-at-birth) and $Y \coloneqq L_{T+1}$ denoting a realization of the outcome
variable at the time point following (i.e., $T+1$) the final treatment $A_T$.
We introduce additional data structures, in which $O$ contains alternative
constituent random variables, in the sequel as needed.

We assume that $O \sim \P_0$, where $\P_0 \in \M$ is the true and unknown
data-generating law, assumed only to be contained in a non-parametric
statistical model $\M$. Further, we assume that the $n$ copies of $O$ are
obtained by i.i.d.~sampling from $\P_0$, and denote by $\P_n$ the empirical
distribution of the study sample $O_1, \ldots, O_n$.
% At times, when the context is clear, we will for convenience use $\P_n$ also
% to denote the empirical mean---that is, $\P_n f \coloneqq \frac{1}{n}
% \sum_{i=1}^n f(O)$ for some arbitrary function $f(\cdot)$ of the study data
% $O$.
We let the mapping $\Psi(\cdot)$ denote a statistical target parameter---often,
a functional that, under assumptions, identifies a causal estimand. The
(statistical) target parameter is a mapping from the model $\M$ containing the
data-generating law $\P_0$ to an appropriate outcome space (e.g., $\R^d$), that
is $\Psi: \M \to \R^d$. When context is clear, we will also suppress dependence
of $\Psi(\cdot)$ on the law $\P_0$, writing $\psi_0 \coloneqq \Psi(\P_0)$ and
$\psi \coloneqq \Psi(\P)$ for arbitrary $\P \in \M$.

As noted in the preceding section, very often, the parameter $\Psi$ does not
depend on the entire data-generating law $\P_0$, but rather, may instead be
expressed in terms of a nuisance function $\eta$, which itself depends on $\P
\in \M$; this allows us to write $\Psi(\P) = \Psi(\eta)$ and, adopting this
convention, we use $\eta$ to denote nuisance functions that depend on $\P$ and
appear in $\Psi$.

\subsection{Riesz representer preliminaries}

% We begin with some facts about elementary functional analysis.
Consider a function $f \in \mathcal{H}$, where $\mathcal{H}$ is a Hilbert space
equipped with inner product $\langle \cdot, \cdot\rangle$; for simplicity, we
limit our discussion to Hilbert spaces with inner products mapping to $\R$.
% A \textit{functional} $\psi$ is defined as a function of another function.
Call the functional $\Psi(\cdot)$ \textit{linear} if $\Psi(f) + \Psi(g) =
\Psi(f + g)$ and $\Psi(a \cdot f) = a \cdot \Psi(f)$ for all $f, g \in
\mathcal{H}$ and scalars $a \in \R$, and call $\Psi(\cdot)$ \textit{bounded}
if, for all $f \in \mathcal{H}$, $\lvert\psi(f)\rvert < M\lVert
f\rVert_{\mathcal{H}}$ for some positive scalar $M \in \R^+$ that is bounded
$M < \infty$. For a bounded linear functional $\Psi(\cdot)$, the Riesz
representation theorem guarantees that there exists a unique function $\alpha$
such that
\begin{equation}\label{eq:riesz-representation}
    \Psi(f) = \langle \alpha, f \rangle \ .
\end{equation}
In Equation~\eqref{eq:riesz-representation}, $\alpha$ is often
referred to as a \textit{Riesz representer}.
% It plays a central role in statistics, where the goal is usually to estimate
% $\psi$ from data.

When $\mathcal{H} \equiv L_2(\P)$, the Hilbert space of square-integrable
functions, the inner product with which the space is equipped denotes
integration with respect to the probability measure $d\P$, that is,
\begin{equation*}
    \langle f, g\rangle = \int f g d\P \ .
\end{equation*}
% Probability is then defined as the integral with respect to some probability
% measure $d\P$. Conveniently, integrals also constitute bounded linear
% functionals.
This representation is often useful when studying a functional $\Psi(\eta)$
that integrates a nuisance function $\eta$ over some unknown probability
distribution $d\P^\star$. Consider a setting where integration with respect to
the probability measure $d\P$ is possible (e.g., sampling from $\P$ is
possible) but that $d\P^\star$ is not readily accessible (i.e., we do not
observe any samples from the corresponding $\P^\star$). When the target
parameter depends on $\P^\star$, we may use the Riesz representation theorem to
re-express it in terms of $\P$ as follows:
\begin{equation}\label{thm:change-of-measure}
    \Psi(\eta) = \int \eta d\P^{\star} = \int \eta \frac{d\P^\star}{d\P}d\P = \int \alpha
    \eta d\P = \langle \alpha, \eta \rangle \ ,
\end{equation}
where $\alpha = d\P^{\star} / d\P$, the Riesz representer, is the Radon-Nikodym
derivative for a change of measure. Recognizing this form can allow one to
easily determine the Riesz representer of a functional of interest.

\subsection{Riesz representers in semi-parametric efficiency theory}

Often, we take as our goal the statistical estimation problem of developing
estimators of a statistical estimand that is a functional of the true but
unknown data-generating law $\P_0$ only through a nuisance function $\eta_0$,
so that we can express the target parameter $\Psi(\P_0)$ as $\Psi(\eta_0)$. As
example, consider the problem of estimating the statistical functional that
identifies the counterfactual mean of the outcome $Y$ under treatment contrast
$A = a$ from a collection of study units, sampled i.i.d.~from $\P_0$, with the
individual-level data structure $O = (L, A, Y)$. Application of the g-formula
yields the plug-in estimand~\citep{hernan2024} $\Psi(\P_0) = \Psi(\eta_0) =
\E[\E[Y \mid A = a, L]]$, where the nuisance function is $\eta_0(A, L) = \E_0[Y
\mid A, L]$. We will return to this example to build intuition.

Two decades ago,~\cite{van2006targeted} proposed a targeted maximum likelihood
estimator (TMLE) that considered functionals of the entire data distribution
$\P_0$. More often, though, we are interested in functionals, as in the example
above, for which the nuisance function $\eta$ is itself a regression function,
such as a conditional expectation function. In such instances, we might proceed
by constructing an initial estimator $\eta_n$ of $\eta_0$ via a data-adaptive
or non-parametric regression procedure that minimizes an appropriately selected
loss function~\citep{vdl2004asymptotic, dudoit2005asymptotics,
vdvaart2006oracle}. For problems following such a structure, we can construct
consistent and asymptotically normal (CAN) estimators of the target parameter
by using the unique Riesz representer, which we may combine with the targeting
procedures developed in the TMLE literature~\citep{van2011targeted}.

\textbf{De-biasing property}. Suppose our goal is to estimate the functional
$\Psi(\eta_0)$ and that we obtain an initial esitmator $\eta_n$ via some
data-adaptive or non-parametric regression procedure. Then, our estimator
$\psi_n \coloneqq \Psi(\eta_n)$ will typically exhibit an asymptotically
non-negligible bias, unless the nuisance estimator $\eta_n$ converges to the
corresponding nuisance function $\eta_0$ at $o_{\P}(n^{-1/2})$ (e.g., as
commonly attained in parametric regression). The bias may be expressed in terms
of the corresponding Riesz representer $\alpha$ and the difference between the
nuisance estimator $\eta_n$ and the nuisance function $\eta_0$:
\begin{equation*}
    \Psi(\eta_n) - \Psi(\eta_0) = \Psi(\eta_n - \eta_0) =
      \langle\alpha, \eta_0 - \eta_n\rangle \ .
\end{equation*}
Hence, if $\alpha$ is known, it can be used to construct an unbiased estimator
$\Psi(\eta_n) + \langle\alpha, \eta_0 - \eta_n\rangle$, in which the Riesz
representer is used to remove the bias term. Note that the unbiased estimator
we describe is analogous to the well-studied one-step de-biased
estimator~\citep{bickel1993efficient}.

\textbf{Double-robust property}. When $\alpha$ is unknown, as may be more
commonly the case, we can construct an estimator $\alpha_n$ of $\alpha_0$ and
use it to de-bias the initial estimator $\Psi(\eta_n)$. In this case, the
resulting ``second-order'' bias will be:
\begin{equation*}
    \Psi(\eta_n) + \langle\alpha_n, \eta_0 - \eta_n\rangle  - \Psi(\eta) =
      \langle\alpha_0 - \alpha_n, \eta_0 - \eta_n\rangle \ ,
\end{equation*}
where the bias will vanish if the product of the errors between $(\alpha_n,
\alpha_0)$ and $(\eta_n, \eta_0)$ is $o_{\P}(n^{-1/2})$. As it arises in
statistical estimation problems rooted in causal inference repeatedly, this
structure has been termed a \textit{doubly-robust} remainder. Applying this to
our running example of the treatment-specific mean $\Psi(\eta_0) = \E[\E[Y
\mid A = a, L]]$, the nuisance is $\eta_0(A, L) = \E[Y \mid A, L]$ and Riesz
representer is the IP-weight $\alpha(A, L) = \1(A = a)/g(A=a \mid L)$ where $g$
denotes the density of $A$, conditional on $L$; hence, the second-order bias
can be observed to take the common form $\int (\alpha_0 - \alpha_n)(\eta_0 -
\eta_n)d\P_0$. This remainder will converge to zero if either the outcome model
$\eta_n$ or the inverse probability weights $\alpha_n$ are correctly specified
(model double-robustness) or if their rate \textit{product} decays faster than
$1/\sqrt{n}$ (rate double-robustness).

\textbf{Efficiency property}. Again suppose $\alpha$ is known so that
$\Psi(\eta_n) + \langle\alpha, \eta_0 - \eta_n\rangle$ is an unbiased
estimator. The existence of a Riesz representer implies pathwise
differentiability of the target parameter \citep{newey1994asymptotic,
Hirshberg2021, Chernozhukov2022b}. In order for the Riesz representer to exist,
the functional must be bounded by the norm of $\eta$; that is, $\Psi(\eta) \leq
M \Vert \eta \rVert_{L_2(\P_0)}$ for some finite positive constant $M$.
Consequently, it is always possible to construct an estimator that reaches the
semi-parametric efficiency bound, which is related to the unique EIF in the
non-parametric model $\M$. Next, we review the derivations of EIFs used to
construct such an efficient estimator across examples.

% \textit{A note on estimation}. Even if we estimate the nuisances, we typically
% cannot directly evaluate $\psi$ or an inner product -- we must estimate it from
% a sample. Suppose that, assuming $\hat{f}$ were known, it were possible to
% construct an asymptotically linear estimator
% \begin{equation*}
%     \hat{\Psi}(f) = \Pr_n h(\eta)
% \end{equation*}
% Since the debiasing term is a covariance, it can be estimated using the
% empirical covariance $\Pr_n \hat\alpha(f - \hat{f})$, meaning the debiased
% estimator is actually
% \begin{equation*}
%     \Pr_n \Big(h(\hat{f}) + \hat{\alpha}[f - \hat{f}]\Big)
% \end{equation*}
% where the inner term minus the functional is the efficient influence function,
% as we show in the next section.

%Just as the propensity score has the ``balancing property'' in binary
%treatments \citep{rosenbaum1983}, and estimators that condition on it remove
%bias and obtain efficiency \citep{Hejazi2023}, the Riesz representer also has
%the same ``balancing property'', albeit for the more general class of linear
%functionals.

%Most works cite \cite[pg. 1356-1357]{newey1994asymptotic} as the source of the
%efficient influence function using Riesz representers. The second cited page
%provides the recursive algorithm given by \cite{Williams2025}. Chapter 1.5.1
%of \cite{vanderLaan2003} describes how linear models fit into the general
%framework of efficient estimating equations using the notion of nuisance
%tangent spaces, but we find that direct computation may be more illustrative. 

\section{Riesz representers for efficient estimation}\label{sec:eif}

Consider the setting of having observed data from a cohort study comprised of
$n$ study units $O_1, \ldots, O_n$, where the study units are sampled
i.i.d.~from $\P_0$, that is $O \sim \P_0$ and $O = (L, A, Y)$. To construct an
efficient estimator of $\psi_0$, we begin with the class of regular
asymptotically linear (RAL) estimators of the estimand $\psi_0$; RAL estimators
$\psi_n$ take the form
\begin{equation*}
  \psi_n - \psi_0 = \frac{1}{n}\sum_{i=1}^n \phi(\P_0)(O_i) +
  o_{\P}(n^{-1/2}) \ ,
\end{equation*}
where $\phi$ is a mean-zero function called the \textit{influence function} of
the estimator $\psi_n$; in the non-parametric model $\M$, there is a unique
influence function for the class of RAL estimators with variance matching the
semi-parametric efficiency bound, called the \textit{efficient influence
function}~\citep{van2011targeted}. The notation $\phi(\P_0)(O)$ is used to
indicate that $\phi$ is a function of the true data distribution $\P_0$ and the
observed data $O$.

Let us now consider the class of bounded linear functionals, so that the target
parameter can be expressed $\Psi(\eta_0) = \E(h(O; \eta_0))$, where $h$
denotes a transformation of the nuisance function $\eta_0$ consistent with the
parameter definition. For example, returning to our running example, the
estimand $\psi_0 = \E_{\P_0}[\E_{\P_0}[Y \mid A = a, L]]$ is a bounded
linear functional $\psi_0 \equiv \Psi(\eta_0) = \E(h(O; \eta_0))$, where
$\eta_0 = \E_{\P_0}[Y \mid A, L]$ and, here, $h(\cdot)$ is the point-evaluation
transformation yielding $h(O; \eta_0) = \E_{\P_0}[Y \mid A = a, L]$. To
construct a RAL estimator of such a functional, we can use the representation
above as a guide: an efficient RAL estimator will act as a solution to the EIF
estimating equation, that is, $\P_0 \phi(\P)(O; \psi_n) = 0$ and have variance
matching that of the EIF $\P_0 [\{\phi(\P)(O; \psi_0)]^2\}$. Several causal
machine learning frameworks have been proposed for the construction of
asymptotically efficient RAL estimators, including one-step de-biased
estimation~\citep{bickel1993efficient} and targeted minimum loss-based
estimation~\citep{van2011targeted}, all of which allow for an initial estimator
of $\eta_n$ to be obtained from data-adaptive or non-parametric regression
methods, including machine learning procedures. All such frameworks require
knowledge of the EIF for RAL estimators of the target parameter of interest.
Riesz representation can be used to simplify the derivation of the EIF. We
describe this next.

\subsection{The single-step Riesz EIF}

Let us first consider the case where the nuisance $\eta$ may be estimated
directly from the observed data on $n$ study units, $O_1, \ldots, O_n$.

\begin{theorem}[Riesz EIF]\label{thm:eif-basic}
    Let $\Psi(\eta) = E[h(O; \eta)]$ be a bounded linear functional of $\eta$
    with Riesz representer $\alpha$. Suppose $\eta = \eta_{\P}$, a function of
    the data distribution $\P$. The efficient influence function of
    $\Psi(\eta_0)$ takes the form
    \begin{equation*}
      \phi(\P)(O) = h(O; \eta_{\P})  - \Psi(\eta_{\P}) +
        \int \alpha(O) \phi_{\eta}(\P)(O) d\P \ ,
    \end{equation*}
    where $\phi_\eta(\P)(O)$ is the efficient influence function of the
    nuisance function $\eta$.
\end{theorem}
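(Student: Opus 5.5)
The plan is to compute the pathwise derivative of $\P \mapsto \Psi(\eta_{\P}) = \int h(o;\eta_{\P})\,d\P(o)$ along a regular one-dimensional submodel and read the EIF off as the gradient. Fix a submodel $\{\P_\varepsilon\}$ through $\P$ at $\varepsilon=0$, with score $s(O)$, e.g.\ $d\P_\varepsilon = (1+\varepsilon s)\,d\P$ for bounded mean-zero $s$. Because $\M$ is non-parametric, the tangent space is all of $L_2^0(\P)$. Hence any gradient is the unique EIF, and it suffices to show
\begin{equation*}
  \frac{d}{d\varepsilon}\Psi(\eta_{\P_\varepsilon})\Big|_{\varepsilon=0} = \E_{\P}\bigl[\phi(\P)(O)\,s(O)\bigr]
\end{equation*}
with $\phi$ as displayed in Theorem~\ref{thm:eif-basic}.

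\textbf{Step 1 (product rule).} Both the integrating measure and the nuisance move with $\varepsilon$, so I split the derivative as
\begin{equation*}
  \frac{d}{d\varepsilon}\int h(o;\eta_{\P_\varepsilon})\,d\P_\varepsilon
  = \int h(o;\eta_{\P})\,s(o)\,d\P(o) + \frac{d}{d\varepsilon}\int h(o;\eta_{\P_\varepsilon})\,d\P\Big|_{\varepsilon=0}.
\end{equation*}
The first term equals $\E_{\P}[\{h(O;\eta_{\P})-\Psi(\eta_{\P})\}s(O)]$, since $s$ has mean zero. This produces the first two terms of $\phi$.

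\textbf{Step 2 (linearity and Riesz representation).} For fixed $\P$, the map $\eta\mapsto\int h(o;\eta)\,d\P$ is exactly the bounded linear functional with representer $\alpha$, by \eqref{eq:riesz-representation}. Linearity lets me move the derivative inside:
\begin{equation*}
  \frac{d}{d\varepsilon}\Psi(\eta_{\P_\varepsilon})\text{ (measure fixed)} = \Psi(\dot\eta), \qquad \dot\eta = \frac{d}{d\varepsilon}\eta_{\P_\varepsilon}\big|_{\varepsilon=0}.
\end{equation*}
Boundedness gives continuity, which justifies the interchange provided the derivative $\dot\eta$ exists in $L_2(\P)$. The Riesz representation then gives $\Psi(\dot\eta) = \langle \alpha, \dot\eta\rangle = \int \alpha\,\dot\eta\,d\P$.

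\textbf{Step 3 (nuisance influence function).} By hypothesis $\eta$ has pointwise influence function $\phi_\eta(\P)$, meaning $\dot\eta(x) = \E_{\P}[\phi_\eta(\P)(O)(x)\,s(O)]$. Substituting this and exchanging the order of integration by Fubini gives
\begin{equation*}
  \int \alpha(x)\,\dot\eta(x)\,d\P(x) = \E_{\P}\Bigl[s(O)\int \alpha(x)\,\phi_\eta(\P)(O)(x)\,d\P(x)\Bigr].
\end{equation*}
So the third term of $\phi$ is the inner integral. Adding Steps 1 and 3 gives the claimed gradient.

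\textbf{Check on the running example.} When $\eta$ is a conditional mean $\E[Y\mid X]$, its pointwise influence function is the Dirac form $\1(X=x)(Y-\eta(x))/p(x)$. Integrating against $\alpha\,d\P$ then collapses the third term to $\alpha(X)(Y-\eta(X))$, recovering \eqref{eqn:riesz-rep}.

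\textbf{Main obstacle.} The delicate step is Step 3. For continuous $X$, $\eta$ is not pathwise differentiable pointwise, so $\phi_\eta$ is only a formal (Gateaux/Dirac) object. I would handle this rigorously by never differentiating $\eta$ at points. Instead I would differentiate the smooth functional $\varepsilon\mapsto\langle \alpha,\eta_{\P_\varepsilon}\rangle$ directly: for conditional expectations, write $\langle\alpha,\eta_{\P_\varepsilon}\rangle = \E_{\P_\varepsilon}[\alpha(X)Y]$ corrected by the change in the marginal of $X$. I would interpret the theorem's integral as this weak pairing, and impose regularity on $\alpha$ (square integrability, and boundedness along the submodel) so that the interchange of derivative and integral is justified.
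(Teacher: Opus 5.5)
Your proposal is correct and follows essentially the same route as the paper. You use the product rule to split off $h(O;\eta_{\P})-\Psi(\eta_{\P})$, use linearity to pass the derivative inside $h$, apply the Riesz identity to get $\int\alpha\,\dot\eta\,d\P$, and then identify $\dot\eta$ with the nuisance EIF. The only difference is presentational: you verify the gradient against a general score submodel $d\P_\varepsilon=(1+\varepsilon s)\,d\P$ and swap integrals by Fubini, whereas the paper reads the influence function off directly from the point-mass contamination $\P_\varepsilon=\varepsilon\delta_O+(1-\varepsilon)\P$; your version is also more explicit about the regularity needed for the interchange.
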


\begin{proof}
We  will use a pointwise contamination strategy~\citep{hines2022demystifying},
coupled with the Riesz representation identity. With the pointwise
contamination approach, we construct a path $\P_\varepsilon =
\varepsilon\delta_O + (1 - \varepsilon)\P$ through the data-generating law
$\P_0 = \P_{\epsilon=0}$ and compute its pathwise derivative:

\begin{align*}
    \nabla_\varepsilon \Psi(\eta_{\P_\varepsilon};\P_\varepsilon)
      \Big|_{\varepsilon = 0} &= \int\nabla_\varepsilon h(O;
      \eta_{\P_\varepsilon})d\P_\varepsilon\Big|_{\varepsilon = 0}\\
    &= \int\Big(\nabla_\varepsilon h(O; \eta_{\P_\varepsilon})
      \Big|_{\varepsilon = 0}\Big)d\P_0 + \int h(O; \eta_{\P_0})\nabla_\varepsilon
      \P_\varepsilon\Big|_{\varepsilon = 0}\tag{Product rule}\\
    &= \int h(O; \nabla_\varepsilon\eta_{\P_\varepsilon})
      \Big|_{\varepsilon = 0}\Big)d\P_0 + \int h(O_i; \eta_{\P_0})(\delta_O + d\P_0)\tag{Linearity of $h$ and $\nabla_\varepsilon$; evaluate $\nabla_\varepsilon P_\varepsilon$ at $\varepsilon = 0$}\\
    &= \int\alpha(X)\underbrace{\Big(\nabla_\varepsilon
      \eta_{\P_\varepsilon}(O)\Big|_{\varepsilon =
      0}\Big)}_{\substack{\text{pathwise derivative}\\\text{of nuisance}}}dP +
      \underbrace{h(O; \eta_{P_0}) - \Psi(\eta_{P_0})\tag{Apply Riesz representation and simplify}}_{\substack{\text{pathwise derivative}\\\text{of mean}}}
\end{align*}

The proof concludes by noting that the pathwise derivative $\nabla_\varepsilon
\eta_{\P_\varepsilon}(O)\Big|_{\varepsilon = 0}$ equals the efficient influence
function of the nuisance $\eta_\P$. 
\end{proof}

To summarize, we compute the EIF by calculating the pathwise derivative of our
desired parameter using the product rule, and then applying the Riesz
representation theorem. Our strategy was to delay use of the identity until we
had obtained a form involving integration with respect to the true
data-generating law $\P_0$. Other than Riesz representation, the technique used
to derive this result follows existing strategies from semi-parametric theory;
chiefly, computation of a pathwise derivative \citep{kennedy2016semiparametric,
kennedy2022semiparametric, van2011targeted}. Its form mirrors familiar results
on efficient influence functions such as those for the statistical functionals
that identify counterfactual means and average treatment effects.

The result generalizes Riesz-based EIF representations presented by previous
authors, including \cite{newey1994asymptotic}, \cite{Hirshberg2021}, and
\cite{Chernozhukov2022} in the sense that any nuisance function may be
considered, not just conditional expectation function. When $\eta_{\P}$ is a
regression function, its EIF is often simple and known, or may otherwise be
easily obtained. When the nuisance function $\eta$ is a conditional expectation
function $\overline{Q}(A, L) = \E_{\P}(Y \mid A, L)$, its EIF is known:
\begin{equation*}
  \frac{\delta_{L, A}}{d\P(A, L)}[Y - \overline{Q}(A, L)] \ ,
\end{equation*}
where $\delta$ denotes a Dirac delta function that places all mass at the
observed $(L, A)$ and zero elsewhere; see \cite{hines2022demystifying} for a
discussion. Noting that integration over a Dirac delta function corresponds to
point evaluation, we have the following corollary:
\begin{corollary}[Riesz EIF with conditional mean]\label{thm:eif-condmean}
    Let $\Psi(\overline{Q}) = \E_{\P}[\overline{Q}(A, L)]$, a bounded linear functional of the conditional mean function $\overline{Q}$.  Then, the EIF of $\Psi(\overline{Q})$ is
    \begin{equation*}
      \phi(\P)(O) = h(O; \overline{Q}) - \psi + \alpha(A, L)(Y -
      \overline{Q}(A, L)) \ .
    \end{equation*}
\end{corollary}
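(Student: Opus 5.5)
The plan is to specialize Theorem~\ref{thm:eif-basic} to the case $\eta_{\P} = \overline{Q}$ and evaluate the integral term by point evaluation. Theorem~\ref{thm:eif-basic} gives
\begin{equation*}
  \phi(\P)(O) = h(O; \overline{Q}) - \Psi(\overline{Q}) + \int \alpha(a', l')\, \phi_{\overline{Q}}(\P)(a', l')\, d\P(a', l') ,
\end{equation*}
where $\alpha$ is the Riesz representer of $\overline{Q} \mapsto \E_{\P}[h(O;\overline{Q})]$ in $L_2(\P)$ over functions of $(A, L)$. By hypothesis this functional is bounded and linear, so Theorem~\ref{thm:eif-basic} applies. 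To keep the notation clean, I will write the integration variables as $(a', l')$, so that they are distinct from the observed $(A, L)$ appearing in $O$.

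\textbf{Step 1 (nuisance EIF).} I take the nuisance EIF stated just above the corollary as given:
\begin{equation*}
  \phi_{\overline{Q}}(\P)(a', l') = \frac{\delta_{L,A}(a', l')}{d\P(a', l')}\,\{Y - \overline{Q}(A, L)\}.
\end{equation*}
If a justification is wanted, it comes from the pointwise contamination path $\P_\varepsilon = \varepsilon\delta_O + (1-\varepsilon)\P$. Write $\overline{Q}_{\varepsilon}(a', l') = \int y\, d\P_\varepsilon(y, a', l') / d\P_\varepsilon(a', l')$ and differentiate this ratio at $\varepsilon = 0$ by the quotient rule. The two resulting terms combine to $\delta_{L,A}(a',l')\{Y - \overline{Q}(a',l')\}/d\P(a',l')$. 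On the support of the Dirac mass, $\overline{Q}(a', l') = \overline{Q}(A, L)$.

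\textbf{Step 2 (collapse the integral).} I substitute this expression into the integral term. The density $d\P(a', l')$ in the denominator cancels against the integrating measure, leaving
\begin{equation*}
  \int \alpha(a', l')\, \delta_{L,A}(a', l')\, \{Y - \overline{Q}(a', l')\}\, d(a', l') .
\end{equation*}
The sifting property of the Dirac delta then evaluates the integrand at the observed point, which gives $\alpha(A, L)\{Y - \overline{Q}(A, L)\}$. Writing $\psi = \Psi(\overline{Q})$ yields the claimed form.

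\textbf{Main obstacle.} The delicate point is the formal use of $\delta_{L,A}/d\P$, which only makes rigorous sense for discrete $(A, L)$ or under a smoothed-contamination limit. If rigor is required, I would do two things. First, I would replace $\delta_O$ by a kernel-smoothed point mass $\delta_O^{(b)}$ and verify the displayed derivative for each bandwidth $b$. Second, I would let $b \to 0$, using continuity of $\alpha$ and $\overline{Q}$, or simply assume discreteness, as \cite{hines2022demystifying} do.

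As a sanity check, the resulting $\phi$ has mean zero under $\P$. The term $h - \psi$ integrates to zero by definition of $\psi$, and the residual term integrates to zero after conditioning on $(A, L)$. For the running example with $h(O;\overline{Q}) = \overline{Q}(a, L)$, the formula reproduces the familiar augmented IPW influence function.
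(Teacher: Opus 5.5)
Your proposal is correct and follows the paper's argument. You specialize Theorem~\ref{thm:eif-basic} to $\eta_{\P} = \overline{Q}$, substitute the known conditional-mean EIF $\delta_{L,A}\{Y - \overline{Q}(A,L)\}/d\P(A,L)$, and collapse the integral by point evaluation against the Dirac delta to get $\alpha(A,L)\{Y - \overline{Q}(A,L)\}$; your quotient-rule derivation of the nuisance EIF and the mean-zero check are extras the paper replaces with a citation to \cite{hines2022demystifying}.
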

This Riesz EIF recovers the form discussed at length by previous authors
including \cite{newey1994asymptotic}, \cite{Hirshberg2021},
\cite{Chernozhukov2022}, and \cite{Williams2025}. It mirrors familiar
presentations of the EIF for parameters such as the covariate-adjusted
treatment-specific mean, whose EIF admits the form
\begin{equation*}
    \phi(\P)(O) = \overline{Q}(a, L) - \psi+ \frac{\1(A = a)}{g(A, L)}
      \Big(Y - \overline{Q}(A, L)\Big) \ ,
\end{equation*}
where $g(A, L) \coloneqq \Pr_{\P}(A = a \mid L)$.

Theorem~\ref{thm:eif-basic} can also be applied
\textit{recursively}. To assist, we introduce a preliminary lemma.
\begin{lemma}[Conditional Riesz EIF]\label{thm:eif-basic-cond}
    Let $V \subset O$ denote some set of covariates. Consider the
    conditional estimand $\Psi_v(\eta) = \E[h(O; \eta) \mid V = v]$. The
    EIF of this estimand is
    \begin{equation*}
        \E[\alpha(O)\phi_\eta(\P)(O) \mid V = v] + \frac{\delta_v}{d\P_V(v)}
        \Big(h(O; \eta_\P) - \E[h(O; \eta_\P) \mid V = v]\Big) \ ,
    \end{equation*}
    where $\phi_\eta(\P)(O)$ denotes the EIF of the nuisance $\eta$, $\delta_v$
    a Dirac delta function equal to one at $v$ and zero elsewhere, and $d\P_V$
    the density of $V$.
\end{lemma}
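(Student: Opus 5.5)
We follow the strategy of the proof of Theorem~\ref{thm:eif-basic}: a pointwise contamination path combined with the Riesz representation identity. The only new ingredient is the conditioning on $V = v$. We first write the conditional estimand as a ratio of integrals:
\begin{equation*}
  \Psi_v(\eta_{\P}) = \frac{\int h(o; \eta_{\P})\, \delta_v(\tilde v)\, d\P(o)}{d\P_V(v)} \ ,
\end{equation*}
where $\tilde v$ denotes the $V$-component of $o$. Equivalently, we write it as an integral of $h(o;\eta_{\P})$ against the conditional law $d\P_{O\mid V=v}$. We then consider the path $\P_\varepsilon = \varepsilon \delta_O + (1-\varepsilon)\P$ and differentiate $\Psi_v(\eta_{\P_\varepsilon};\P_\varepsilon)$ at $\varepsilon = 0$. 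By the chain and product rules, two kinds of dependence on $\varepsilon$ contribute: the dependence through the nuisance $\eta_{\P_\varepsilon}$, and the dependence through the conditional measure $\P_{\varepsilon, O\mid V=v}$.

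\textbf{Nuisance term.} We move the derivative inside $h$, using linearity of $h$ in $\eta$ exactly as in the third line of the proof of Theorem~\ref{thm:eif-basic}. This gives
\begin{equation*}
  \E\bigl[h(O; \nabla_\varepsilon \eta_{\P_\varepsilon}|_{\varepsilon=0}) \mid V = v\bigr] \ .
\end{equation*}
We then invoke the Riesz representation of the conditional functional $f \mapsto \E[h(O;f)\mid V=v]$ to rewrite this as $\E[\alpha(O)\phi_\eta(\P)(O) \mid V=v]$, identifying $\nabla_\varepsilon \eta_{\P_\varepsilon}|_{\varepsilon=0}$ with $\phi_\eta(\P)$ as in Theorem~\ref{thm:eif-basic}. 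Here $\alpha$ is read as the representer with respect to the conditional law given $V=v$ (or, equivalently, the unconditional representer restricted to the stratum $V = v$).

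\textbf{Measure term.} With $\eta$ held fixed at $\eta_{\P}$, we apply the standard pointwise-contamination calculation for a conditional mean, as in \cite{hines2022demystifying}. Differentiating the numerator gives $\delta_v(V)\, h(O;\eta_{\P}) - \int h\,\delta_v\, d\P$. Differentiating the denominator gives $\delta_v(V) - d\P_V(v)$. The quotient rule then collapses these to
\begin{equation*}
  \frac{\delta_v}{d\P_V(v)}\bigl(h(O;\eta_{\P}) - \E[h(O;\eta_{\P})\mid V=v]\bigr) \ .
\end{equation*}
This is the same computation that produces the known EIF $\frac{\delta_{L,A}}{d\P(A,L)}[Y - \overline{Q}(A,L)]$ of a regression function quoted before Corollary~\ref{thm:eif-condmean}, with $Y$ replaced by $h(O;\eta_{\P})$. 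Summing the nuisance term and the measure term yields the claimed expression.

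\textbf{Main obstacle.} The delicate step is justifying the Riesz step under conditioning. The representer in the statement is written as $\alpha$, the representer of the unconditional functional. For $\E[h(O;f)\mid V=v] = \E[\alpha(O) f \mid V=v]$ to hold, the representation must be valid stratum-wise in $V$. I would first verify this in the change-of-measure form~\eqref{thm:change-of-measure}: there $\alpha = d\P^\star/d\P$, and when $\P^\star$ and $\P$ share the $V$-marginal (as in the point-evaluation and g-formula examples), $\alpha$ also serves as the Radon--Nikodym derivative of the conditional laws given $V=v$. If that fails in general, I would instead state the lemma with $\alpha$ defined as the conditional representer, and note that it coincides with the unconditional one in these cases. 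The rest of the argument is routine differentiation, with the usual heuristic handling of Dirac deltas.
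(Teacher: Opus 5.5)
Your proposal is correct and follows the paper's proof essentially step for step. Like the paper, you use pointwise contamination with a product-rule split into two pieces: a nuisance term (linearity of $h$, then Riesz representation, then identifying $\nabla_\varepsilon \eta_{\P_\varepsilon}|_{\varepsilon=0}$ with $\phi_\eta$), and a measure term given by the conditional-mean EIF of Example 6 in \cite{hines2022demystifying}. Your explicit quotient-rule computation and your caveat that $\alpha$ must represent the conditional functional $f \mapsto \E[h(O;f)\mid V=v]$ are details the paper leaves implicit; the caveat matches how $\alpha_t$ is actually defined in Theorem~\ref{thm:eif-recursive}.
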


\begin{proof}
    Using the point mass contamination strategy, we obtain the result by
    applying the product rule to the EIF for a conditional mean functional,
    following Example 6 of \cite{hines2022demystifying} to note
    \begin{align*}
      \nabla_\varepsilon\Phi(\eta_{\P_\varepsilon}; \P_\varepsilon) &=
        \E\Big[h\Big(O; \nabla_\varepsilon \eta_{\P_\varepsilon}
        \Big|_{\varepsilon = 0}\Big) \mid V = v\Big] \\&+
        \frac{\delta_v}{d\P_{0,V}(v)}\Big(h(O; \eta_{\P_0}) -
        \E[h(O; \eta_{\P_0}) \mid V = v]\Big) \ ,
    \end{align*}
    and then, applying the Riesz representation theorem to the first term, we
    note that the pathwise derivative of $\eta_{\P_\varepsilon}$ at
    $\varepsilon = 0$ is the EIF.
\end{proof}

Unless $V$ is discrete, this parameter is not pathwise differentiable due to
the presence of the Dirac delta (hence, we only employ Riesz representation for
the first component under expectation). However, it will be a useful
intermediate result for applying Theorem~\ref{thm:eif-basic}
recursively. We apply this lemma in the next two
sections to derive Riesz-based EIFs for more complex parameters of interest.

\subsection{Sequential representation of a Riesz EIF}

Suppose now, rather than data from a study with treatment administered at a
single time point, one instead observes a sequence of time-ordered variables,
where the data unit may be represented $O = (L_0, L_1, A_1, L_2, A_2, \ldots,
L_T, A_T, Y)$. The prototypical example is a longitudinal study with
time-varying treatment and time-varying treatment-confounder
feedback~\citep{robins1986, hernan2024}, but this form also encompasses
settings such as that of causal mediation analysis. As before, we assume that
we have an i.i.d.~sample of $n$ study units, $O_1, \ldots, O_n$, where each $O
\sim \P_0$.

In such a setting, we can consider an estimand represented by a sequence of
regression functions. For instance, updating our running example, a common goal
in causal inference is to evaluate the counterfactual mean under a fixed
treatment sequence, for example, $(A_1 = a, A_2 = a, \ldots, A_T = a)$, which
indicates that treatment is set to $a \in \mathcal{A}$ at all time points $t =
1, \ldots, T$. The effect of such a treatment regime may be identified via the
g-formula~\citep{robins1986, hernan2024} and leads to an estimand of the form
\begin{equation*}
\Psi(\P_0) =
  \E[\E[\cdots \E[\E[Y \mid \bar{A}_T
    = \bar{a}_T, \bar{L}_T] \mid \bar{A}_{T-1} =
      \bar{a}_{T-1}, \bar{L}_{T-1}] \cdots \mid A_1 = a, L_1, L_0]] \ ,
\end{equation*}
where we use $\bar{Z}_t$ to denote the history of random variable $Z$ up until
time $t$, that is  $\bar{Z}_t = (Z_1, \ldots, Z_{t})$, and each expectation
is taken over $\P_0$. Because a conditional expectation is itself a linear
functional, the corresponding EIF of such an estimand may be expressed in terms
of sequential expressions involving Riesz representers, as elaborated upon
next.

\begin{theorem}[Sequential Riesz EIF]\label{thm:eif-recursive}
    Consider the bounded linear functional 
    $$\Psi(\P) = \E_{\P}[h_1(\bar{A}_{1}, \bar{L}_{1};
    \overline{Q}_1)]\,$$
    where $\overline{Q}_t$ is a bounded linear functional defined sequentially
    such that, for $t = 1, \ldots, T$, we have
    \begin{equation*}
      \overline{Q}_{t}(\bar{A}_{t}, \bar{L}_{t}) = \E_\P[h_{t+1}(\bar{A}_{t+1},
      \bar{L}_{t+1}; \overline{Q}_{t+1}) \mid \bar{A}_t, \bar{L}_t] \ ,
    \end{equation*}
    with $\overline{Q}_T(\bar{A}_T, \bar{L}_T) \coloneqq \E_\P(Y \mid \bar{A}_T,
    \bar{L}_T)$. Let $\alpha_t$ denote the Riesz representer for
    $\overline{Q}_{t+1}$ in the functional
    $\E_\P[h_{t+1}(\bar{A}_{t+1},
      \bar{L}_{t+1}; \overline{Q}_{t+1}) \mid \bar{A}_t, \bar{L}_t]$. Then,
      defining the outcome $h_{T}(\bar{A}_{T+1}, \bar{L}_{T+1};
      \overline{Q}_{T+1}) \coloneqq Y$, the EIF of $\Psi(\P)$ is
    \begin{align*}
      \phi(\P)(O) =&
        \sum_{t=1}^{T}\prod_{k=1}^t \alpha_k(\bar{A}_{k},
        \bar{L}_{k})[h_{t+1}(\bar{A}_{t+1}, \bar{L}_{t+1};
        \overline{Q}_{t+1}) - \overline{Q}_t(\bar{A}_{t}, \bar{L}_{t})] \\
       &+ h_1(\bar{A}_{1}, \bar{L}_{1}; \overline{Q}_1) - \psi \ .
    \end{align*}
\end{theorem}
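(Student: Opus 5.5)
The plan is to argue by backward induction on the nesting depth, using Theorem~\ref{thm:eif-basic} for the outermost layer and Lemma~\ref{thm:eif-basic-cond} for each inner layer. Concretely, I will compute the pointwise-contamination derivative $\nabla_\varepsilon \overline{Q}_{t,\P_\varepsilon}|_{\varepsilon=0}$ for every $t$, working from $t=T$ down to $t=1$, and then plug the $t=1$ result into Theorem~\ref{thm:eif-basic}.

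\textbf{Base case and induction claim.} At $t=T$, $\overline{Q}_T$ is an ordinary conditional mean, so its EIF is the known expression
\[
\frac{\delta_{\bar a_T,\bar l_T}}{d\P(\bar a_T,\bar l_T)}\,(Y-\overline{Q}_T).
\]
The induction claim is that, for each $t$,
\[
\nabla_\varepsilon \overline{Q}_{t,\P_\varepsilon}(\bar a_t,\bar l_t)\Big|_{\varepsilon=0}
= \frac{\delta_{\bar a_t,\bar l_t}}{d\P(\bar a_t,\bar l_t)}
\sum_{s=t}^{T}\Big(\prod_{k=t+1}^{s}\alpha_k\Big)\big[h_{s+1}(\cdot;\overline{Q}_{s+1})-\overline{Q}_s\big],
\]
with the empty product equal to one and $h_{T+1}:=Y$. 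This is the standard ``delta-normalized'' form.

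\textbf{Inductive step.} I apply Lemma~\ref{thm:eif-basic-cond} with $V=(\bar A_t,\bar L_t)$ and nuisance $\eta=\overline{Q}_{t+1}$. The second term of the lemma directly gives the $s=t$ summand. The first term is $\E[\alpha_{t+1}\,\phi_{\overline{Q}_{t+1}}\mid V=v]$, and I substitute the inductive hypothesis into it. The key point is that $\phi_{\overline{Q}_{t+1}}$ carries the factor $\delta_{\bar a_{t+1},\bar l_{t+1}}/d\P(\bar a_{t+1},\bar l_{t+1})$. Integrating this against the conditional law of $(\bar A_{t+1},\bar L_{t+1})$ given $V=v$ converts point evaluation into
\[
\frac{\delta_v}{d\P_V(v)}\,\alpha_{t+1}(\bar A_{t+1},\bar L_{t+1})\times(\text{remaining sum at the observed }O),
\]
because $d\P(\bar a_{t+1},\bar l_{t+1}) = d\P(\bar a_{t+1},\bar l_{t+1}\mid v)\,d\P_V(v)$. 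This reproduces the inductive form with one extra factor $\alpha_{t+1}$. Some care is needed with the indexing convention, since the statement writes $\alpha_t$ for the representer attached to $\overline{Q}_{t+1}$. I would align the indices so that the outermost representer from $h_1$ is $\alpha_1$, which yields exactly the product $\prod_{k=1}^t\alpha_k$ in the statement.

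\textbf{Final step.} I apply Theorem~\ref{thm:eif-basic} with $\eta=\overline{Q}_1$. This contributes $h_1-\psi$ plus
\[
\int \alpha_1\,\phi_{\overline{Q}_1}\,d\P.
\]
The same Dirac-delta integration collapses this integral to $\alpha_1(O)$ times the inductive sum, which gives the displayed formula.

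\textbf{Main obstacle.} The delicate step is justifying the Dirac-delta manipulation in the inductive step. One must interchange the Riesz representation, which holds for genuine $L_2$ functions, with the formal EIF of a conditional mean, which contains a delta and is not itself pathwise differentiable. To make this rigorous I would first try a smooth-kernel approximation of $\delta$ in the contamination, in the spirit of \cite{hines2022demystifying}. The aim is to check that every delta eventually cancels against the outer integration, so that only the final unconditional expression needs to be a bona fide influence function.
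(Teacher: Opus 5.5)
Your proposal is correct and follows essentially the same route as the paper. Both arguments use the conditional-mean EIF at step $T$, apply Lemma~\ref{thm:eif-basic-cond} recursively to the intermediate $\overline{Q}_t$, and apply Theorem~\ref{thm:eif-basic} to the outermost layer, with the $\delta/d\P$ weights collapsing each integral into point evaluation. Your explicit backward-induction hypothesis, and your shift of the $\alpha$ indices so that $\alpha_1$ is the outermost representer, make the paper's informal ``plug in repeatedly'' step precise; the index shift also repairs an off-by-one inconsistency between the statement's definition of $\alpha_t$ and its use in the formula and proof.
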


\begin{proof}
We apply Lemma~\ref{thm:eif-basic-cond}
recursively over a changing sequence of
conditional distributions. Note that, by the sequential nature of the data
structure, the probability measure of the data can be decomposed, letting
$\bar{A}_{0} = \emptyset$, as
\begin{equation*}
  d\P = d\P(Y \mid \bar{A}_T, \bar{L}_T)
    \prod_{t=0}^T d\P(A_t, L_t \mid \bar{A}_{t-1}, \bar{L}_{t-1}) \ .
\end{equation*}
To ease notational burden, this proof will employ the shorthand $d\P^t =
d\P(A_t, L_t \mid \bar{A}_{t-1}, \bar{L}_{t-1})$. We will also denote the EIF
of $\overline{Q}_{t}(\bar{A}_{t}, \bar{L}_{t})$ as
$\phi_{\overline{Q}_t}(\P_0)(O)$. The proof will consist of three components:
(1) the EIF of the first step $t = 1$, (2) the EIF of the last step $t = T$,
and (3) the EIF of any intermediate steps.

(1) In the first case, at $t = 1$, Theorem~\ref{thm:eif-basic}
implies
\begin{align*}
  \phi(\P)(O) &= h_1(A_{1}, L_{1}, L_0; \overline{Q}_1) -
    \psi_0 \\&+ \int\alpha_1(A_1, L_1, L_0)
    \phi_{\overline{Q}_1}(A_1, L_1, L_0)d\P_0(A_1, L_1, L_0) \ .
\end{align*}

(2) In the second case, at $t = T$, the EIF of the final regression
$\overline{Q}_T(\bar{A}_T, \bar{L}_T) \coloneqq \E(Y \mid \bar{A}_T,
\bar{L}_T)$ follows from that of a conditional mean:
\begin{equation*}
    \phi_{\overline{Q}_T}(\P)(O) = \frac{\delta_{\bar{A}_T, \bar{L}_T}}
    {d\P(\bar{A}_T, \bar{L}_T)}[Y - \overline{Q}_T(\bar{A}_T, \bar{L}_T)]
\end{equation*}

(3) In each intermediate step, applying
Lemma~\ref{thm:eif-basic-cond}, we have that
the EIF of an arbitrary element $\overline{Q}_t$ of the sequence is
\begin{align*}
    \phi_{\overline{Q}_t}(\P)(O) &= \int \alpha_{t}(A_{t+1}, L_{t+1})
    \phi_\eta(\P)(A_{t+1}, L_{t+1}) d\P^{t+1} +\\ & \frac{\delta_{\bar{A}_t, \bar{L}_t}}
    {\prod_{k=1}^td\P^{t}}\Big(h_{t+1}(\bar{A}_{t+1}, \bar{L}_{t+1}; \eta_\P) -
    \E[h_{t+1}(\bar{A}_{t+1}, \bar{L}_{t+1}; \eta_\P) \mid
    \bar{A}_{t}, \bar{L}_{t}]\Big) \ .
\end{align*}
Finally, plug in the intermediate EIF (3) into the EIF (1) repeatedly until $t
= T$, and plug in EIF (2) for the final step. The weights $\delta_{\bar{A}_t,
\bar{L}_t}/{\prod_{k=1}^td\P_{t}}$ eliminate each integral operator by turning
the integration into a point-evaluation at $A_{t}, L_{t}$ at each step and
canceling out the accumulating measures $\prod_{k=1}^t\P_k$.
\end{proof}

Note that this EIF differs slightly from previous Riesz-based expresions for
the EIF such as those presented by~\cite{Williams2025} in that the final
result involves a \textit{product} of Riesz representers corresponding to each
nested nuisance. Interestingly, each product is \textit{itself} a Riesz
representer for its corresponding nuisance function with respect to the overall
functional. Taking this view of the products recovers the form presented by
\cite{Williams2025}. In the context of estimation for large $T$, in some cases
it may be more numerically stable to estimate $\omega_t = \prod_k^t \alpha_k$
directly instead of estimating each individual $\alpha_k$ and multiplying them.

\subsection{Riesz representation for more complex functionals}

Beyond their use in obtaining efficient influence functions directly, the Riesz
representation strategies of Theorems~\ref{thm:eif-basic}
and~\ref{thm:eif-recursive} can also be used
to derive \textit{components} of an EIF for a functional that may not be
linear. The simplest example of this can be seen for functionals such as a
ratio of counterfactual means
\begin{equation*}
    \Psi(\P) = \frac{\E[\E(Y \mid A = a_1, L)]}{\E[\E(Y \mid A = a_2, L)]} \ ,
\end{equation*}
whose EIF and asymptotic distribution can be derived in terms of Riesz
representers by first applying Theorem~\ref{thm:eif-condmean}
to the numerator and denominator and then using the delta method. In general,
this strategy can be easily applied to any functional $ \psi = b(\psi_1,
\ldots, \psi_k)$ where the components $\psi_1, \ldots, \psi_k$ are bounded
linear functionals and $b$ is differentiable and does not depend on the data
distribution $\P$.

A more involved use-case arises in the context of alternative sampling schemes
or incomplete data applications. Suppose one would like to obtain inference on
a parameter of the distribution $\P_0^X$ of data $X \sim \P_0^X$ but only
observes a coarsened data structure $O = (V, \Delta, \Delta X) \sim \P_0$. Here,
$X$ is only observed according to the coarsening
process~\citep{heitjan1991ignorability, gill1997coarsening} underlying the
sampling indicator $\Delta \in \{0, 1\}$, which may depend on covariates $V
\subset X$. In this case, one can combine Theorems~\ref{thm:eif-basic}
and~\ref{thm:eif-recursive}
to derive an EIF analogous to those
previously presented by~\cite{Rose2011} and~\citep{robins1994estimation}.

\begin{theorem}\label{thm:eif-twophase}
    Let $\Psi(\P)$ be a bounded linear functional whose EIF under
    complete data $X \sim \P^X$ is given as $\phi(\P^X)(O)$. Then,
    its EIF under the observed data $O \sim \P$ takes the form
    \begin{align*}
        \phi(\P)(O) &= \E[\phi^{\text{uc}}(\P^X)(X) \mid \Delta= 1, V ] - \psi
            \\&+ \alpha(\Delta, V)(\phi^{\text{uc}}(\P^X)(X) -
            \E[\phi^{\text{uc}}(\P^X)(X) \mid \Delta = 1, V]) \ ,
    \end{align*}
    where $\phi^{\text{uc}}(\P^X)(X)$ denotes the uncentered complete-data EIF
    $\phi(\P^X)(X) + \psi$.
\end{theorem}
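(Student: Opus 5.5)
The plan is to identify the full-data parameter with a nested functional of the observed-data law, and then read off the observed-data EIF from Theorem~\ref{thm:eif-recursive} (equivalently, two applications of Theorem~\ref{thm:eif-basic}). We assume the coarsening is at random, $\Delta \perp X \mid V$, and that $\P(\Delta = 1 \mid V) > 0$.

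\textbf{Step 1: re-express the parameter in observed-data terms.} Since $\Psi$ is pathwise differentiable under complete data, it satisfies $\psi = \E_{\P^X}[\phi^{\text{uc}}(\P^X)(X)]$. Under coarsening at random, the inner conditional mean is identified:
\begin{equation*}
  \E[\phi^{\text{uc}}(X) \mid V] = \E[\phi^{\text{uc}}(X) \mid \Delta = 1, V] \eqqcolon \overline{Q}(\Delta = 1, V) \ .
\end{equation*}
Hence $\Psi(\P) = \E_\P[\overline{Q}(1, V)]$. This functional is linear in $\overline{Q}$, has the structure of the running treatment-specific mean example (with $\Delta$ playing the role of $A$ and $V$ the role of $L$), and its Riesz representer is $\alpha(\Delta, V) = \1(\Delta = 1)/\P(\Delta = 1 \mid V)$, obtained from the change-of-measure identity~\eqref{thm:change-of-measure}.

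\textbf{Step 2: apply the Riesz EIF.} If $\phi^{\text{uc}}$ were a fixed function of $X$, Corollary~\ref{thm:eif-condmean} with ``outcome'' $\phi^{\text{uc}}(X)$ would directly give
\begin{equation*}
  \overline{Q}(1, V) - \psi + \alpha(\Delta, V)\big(\phi^{\text{uc}}(X) - \overline{Q}(1, V)\big) \ ,
\end{equation*}
which is exactly the stated form. The remaining work is to justify treating $\phi^{\text{uc}}$ as fixed, even though it depends on $\P^X$ through nuisances such as $\eta$.

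\textbf{Step 3: handle the dependence of $\phi^{\text{uc}}$ on $\P$ (main obstacle).} Perturb along $\P_\varepsilon = \varepsilon\delta_O + (1-\varepsilon)\P$ and apply the product rule, as in the proof of Theorem~\ref{thm:eif-basic}. This produces the terms above plus an extra term
\begin{equation*}
  \E\Big[\nabla_\varepsilon \phi^{\text{uc}}(\P^X_\varepsilon)(X)\big|_{\varepsilon=0}\Big] \ .
\end{equation*}
We will argue that this extra term vanishes because the EIF is mean-zero under every law. Differentiating $\E_{\P_\varepsilon}[\phi(\P_\varepsilon)] = 0$ gives $\E_\P[\nabla_\varepsilon \phi] = -\nabla_\varepsilon\psi_\varepsilon\cdot(\ldots)$. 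More cleanly, write $\psi_\varepsilon = \E_{\P^X_\varepsilon}[\phi^{\text{uc}}(\P^X_\varepsilon)]$; the derivative then splits into a term from the measure and a term from the integrand. Since the EIF is itself the pathwise derivative of $\psi$, the integrand term must be zero. Equivalently, by the von Mises expansion, $\psi(\P_\varepsilon) - \psi(\P) = \E_{\P_\varepsilon}[\phi(\P)] + O(\varepsilon^2)$.

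To make this precise, I will expand $\psi(\P_\varepsilon) = \E_{\P_\varepsilon}[\phi^{\text{uc}}(\P)(X)] + R_2$, with a second-order remainder $R_2 = O(\varepsilon^2)$. The derivative at $\varepsilon = 0$ then only involves $\phi^{\text{uc}}(\P)$ held fixed, and the Step 2 computation applies, now also using the coarsening-at-random factorization of $d\P$.

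Finally, I will verify that the resulting expression is mean-zero. It is the unique influence function in the nonparametric model for $O$, since that model is locally saturated under coarsening at random. This establishes efficiency.
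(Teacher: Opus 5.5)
Your proposal is correct and follows the paper's overall skeleton. Both write $\psi=\E[\E(\phi^{\text{uc}}(\P^X)(X)\mid\Delta=1,V)]$ and apply the single-step Riesz EIF (Theorem~\ref{thm:eif-basic} / Corollary~\ref{thm:eif-condmean}) with nuisance $\E[\phi^{\text{uc}}\mid\Delta,V]$ and $\phi^{\text{uc}}$ as the ``outcome.'' Both then argue that the dependence of $\phi^{\text{uc}}$ on $\P$ contributes nothing.

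The difference lies in that last step.

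\textbf{The paper's route.} It invokes Lemma~\ref{thm:eif-basic-cond} with an unspecified representer $\tilde\alpha$ and asserts that the ``EIF of $\phi^{\text{uc}}$'' is $\phi(\P^X)$. It then discards $\int\alpha\,\E[\tilde\alpha\phi\mid\Delta,V]\,d\P$ by appeal to $\phi$ being mean-zero. This is heuristic for two reasons. First, $\phi^{\text{uc}}$ is generally not a linear functional of a single nuisance; for instance, it involves $1/g$ for the treatment-specific mean. Second, $\E[\alpha\tilde\alpha\phi]$ does not vanish merely because $\E[\phi]=0$.

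\textbf{Your route.} You identify the correct extra term: by CAR it equals $\E_{\P^X}[\dot\phi^{\text{uc}}]$, where $\dot\phi^{\text{uc}}=\nabla_\varepsilon\phi^{\text{uc}}(\P^X_\varepsilon)\vert_{\varepsilon=0}$. Differentiating $\psi(\P^X_\varepsilon)=\E_{\P^X_\varepsilon}[\phi^{\text{uc}}(\P^X_\varepsilon)(X)]$ gives two pieces. The measure piece equals $\dot\psi$, because $\phi$ is the gradient and the constant $\psi$ integrates to zero against the score. The remaining piece is $\E_{\P^X}[\dot\phi^{\text{uc}}]$, which must therefore be zero. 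Equivalently, the von Mises expansion lets you freeze $\phi^{\text{uc}}$ at $\P$.

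This argument is more rigorous and more general. It needs only pathwise differentiability of the full-data parameter, not a Riesz structure for $\phi^{\text{uc}}$. It also makes explicit the CAR and positivity assumptions that the paper's statement leaves implicit.

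When you write it up, tidy three points:
\begin{enumerate}
\item Replace the garbled ``$-\nabla_\varepsilon\psi_\varepsilon\cdot(\ldots)$'' with the clean identities $\E[\dot\phi]=-\dot\psi$, hence $\E[\dot\phi^{\text{uc}}]=\E[\dot\phi]+\dot\psi=0$.
\item State that the full-data path is the one induced by the observed-data path via $\P^X_\varepsilon(dv,dx)=\P_\varepsilon(dv)\,\P_\varepsilon(dx\mid\Delta=1,v)$.
\item Note that a remainder of order $o(\varepsilon)$ suffices; your $O(\varepsilon^2)$ claim would require extra smoothness.
\end{enumerate}
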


\begin{proof}
    This result follows by noting that
    \begin{equation*}
      \psi = \E[\phi^{\text{uc}}(\P)(X)] = \E(\E[\phi^{\text{uc}}(\P^X)(X) \mid
        \Delta = 1, V])
    \end{equation*}
    and applying Theorem~\ref{thm:eif-basic} with
    $\eta(\Delta, V) = \E[\phi^{\text{uc}}(\P^X)(X) \mid \Delta , V]$ as the
    nuisance function to obtain
    \begin{align}\label{eq:twophase-proof}
      \phi(\P)(O) &= \E[\phi^{\text{uc}}(\P^X)(X) \mid \Delta= 1, V ] - \psi_0 \nonumber
         \\&+ \int\alpha(\Delta, V)\phi_\eta(\P)(\Delta, V)d\P(\Delta, V) \ .
    \end{align}
    \noindent
    By Lemma~\ref{thm:eif-basic-cond}, for
    some Riesz representer $\tilde{\alpha}$ (distinct from $\alpha$) the
    nuisance EIF $\phi_\eta$ is given
    \begin{align*}
      \phi_\eta(\P)(\Delta, V) &=
          \E[\tilde{\alpha}(X)\phi_{\phi^{uc}}(\P)(X) \mid \Delta, V]
        \\&+ \frac{\delta_{\Delta, V}}{d\P(\Delta, V)}\Big(\phi^{uc}(\P^X)(X)
        - \E[\phi^{uc}(\P^X)(X) \mid \Delta, V]\Big) \ .
    \end{align*}
    Since an EIF is a mean-zero projection, the EIF $\phi_{\phi^{uc}}$ is
    simply the centered version $\phi(\P^X)(X)$. Plugging in the expression for
    $\phi_\eta(\Delta, V)$ into the second term of
    Equation~\eqref{eq:twophase-proof}, we observe that
    \begin{align*}
      &\int\alpha(\Delta, V) \E[\tilde{\alpha}(X)
        \phi(\P^X)(X) \mid \Delta, V] d\P(\Delta, V) = 0 \tag{EIF is mean-zero}\\
      &\int\alpha(\Delta, V) \frac{\delta_{\Delta, V}}{d\P(\Delta, V)}
        \Big(\phi^{uc}(\P^X)(X) -
          \E[\phi^{uc}(\P^X)(X) \mid \Delta, V]\Big)d\P(\Delta, V)
      \\&= \alpha(\Delta, V)\Big(\phi^{uc}(\P^X)(X) -
        \E[\phi^{uc}(\P^X)(X) \mid \Delta, V]\Big) \tag{Cancel probability measures and integrate over Dirac delta}
    \end{align*}
The above steps follow by algebraic cancellation. Plugging these results into
Equation~\eqref{eq:twophase-proof} concludes the proof.
\end{proof}

\section{Riesz TMLE}\label{sec:tmle}

When $\psi_0$ is a linear functional, a TMLE can be constructed to exploit the
Riesz representer. The key idea relied upon by the TMLE framework is that
\textit{any} estimator solving the EIF estimating equation will be
asymptotically efficient. TMLE uses this to construct asymptotically efficient
plug-in estimators.

The ``classical'' approach to efficient estimation is to solve the estimating
equation $\P_n\phi(P)(O; \psi) = 0$ directly for $\psi(\eta_n)$. However, one can take
another tack. Suppose we have an initial estimator $\psi_n$ (such as the
plug-in estimator); we may propose to use a loss function $L(X; \varepsilon)$
that depends on one or more parameters $\varepsilon$ such that for some vector
$v$ of known constants,
\begin{equation}
  v^\top\nabla_\varepsilon L(O; \varepsilon)\Big|_{\varepsilon = 0} =
  \frac{1}{n}\sum_{i=1}^n\phi(\P)(O_i) \ .
\end{equation}
In this setup, optimizing $L$ with respect to $\varepsilon$ is equivalent
to solving $\nabla_\varepsilon L(X; \varepsilon)\Big|_{\varepsilon = 0} = 0$,
and by extension, solving $\P_n\phi(\P)(O) = 0$, which produces an efficient
estimator. If the estimating equation is not solved in one step, the process
can be iterated until convergence~\citep{van2006targeted}.

We follow similar logic to~\cite{van2006targeted} and~\cite{Gruber2010} to
construct a TMLE for functionals wth Riesz-based EIFs by ``fluctuating'' around
the Riesz representer. As a warm-up, suppose $O = (L, A, Y)$ and our EIF takes
the form
\begin{equation*}
    \phi_\text{EIF}(O) = \underbrace{h(A, L; \overline{Q})  -
    \psi}_{\phi_1\text{ Sample mean EIF}} + \underbrace{\alpha(A, L)[Y -
    \overline{Q}(A, L)]}_{\phi_2 \text{ Riesz-based sub-EIF}}
\end{equation*}
We can construct a loss $L = L_1 + L_2$ whose pathwise derivatives equal each
part. For the first part, let $\P_{\varepsilon_1} = (1 +
\varepsilon_1 \phi_1)\P$, and the loss $L_1(\P) = -\log(\P)$, so that
$L_1(\P_{\varepsilon_1}) = -\log\Big((1 + \varepsilon_1\phi_1)\P\Big)$. Taking
the derivative, $\nabla_{\varepsilon_1}L_1(\phi_1)\Big|_{\varepsilon = 0} =
\frac{\phi_1\P}{(1 + (0)\phi_1)\P} = \phi_1$, so optimizing this loss
solves the first part of the estimating equation. Also, $\varepsilon = 0$ is
the MLE of $\E(L_1(P_{\varepsilon_1}))$, because $\E(\phi_1) = 0$ since it
represents the EIF of a sample mean, which must be mean-zero by definition of a
score. It may be excluded from the loss, because its solution is known.

For $L_2$, we can define $\overline{Q}_{\varepsilon_2} = \overline{Q} -
\varepsilon_2\alpha$, and a loss that, at $\varepsilon_2 = 0$, equals $\phi_2$.
For example, the squared error loss satisfies
\begin{equation*}
  \nabla_{\varepsilon_2}L_2(O; \varepsilon)\Big|_{\varepsilon = 0} =
    \nabla_{\varepsilon_2}\frac{1}{2}[Y - \overline{Q}(A, L) +
    \varepsilon_2\alpha(A, L)]^2\Big|_{\varepsilon_2 = 0} =
    \alpha(L, A)[Y - \overline{Q}(A, L)] \ ,
\end{equation*}
which is equivalent to $\phi_2$. Since we used the squared error loss,
$\varepsilon_2$ can be estimated by simply fitting linear regression with
offset $\overline{Q}(A, L)$ to estimate $Y$. If $L_2$ were some other loss, we
could fit a similar regression using that loss (for example, \cite{Gruber2010}
use log-loss for bounded outcomes). In general, we will consider regressions of
the form $\link(Y) = \link(\overline{Q}(A, L)) + \varepsilon \alpha(A, L)$.

Now, suppose instead our EIF was instead defined as in
Theorem~\ref{thm:eif-recursive}. Our goal
should be to construct a loss function $L = \sum_{t=1}^TL_t$ where the loss of
each over a parameter $\varepsilon_j$ will be minimized at the $j$th component
of the recursive EIF. How can we construct such a loss?

There are a few different ways to do this. One strategy is to construct a
sequential TMLE analogous to that of~\cite{Stitelman2011}
or~\cite{vanderLaan2012}. The procedure proceeds as described in
Algorithm~\ref{alg:riesz-tmle}. This algorithm
provides a TMLE for estimating the general sequential regression functional
$\Psi(\P) = \E_{\P}[h_1(\bar{A}_{1}, \bar{L}_{1}; \overline{Q}_1)]$, where
$\overline{Q}_t$ is recursively defined $\overline{Q}_{t}(\bar{A}_{t},
\bar{L}_{t}) \coloneqq \E[h_{t+1}(\bar{A}_{t+1}, \bar{L}_{t+1};
\overline{Q}_{t+1}) \mid \bar{A}_t, \bar{L}_t]$, and the final regression is
$\overline{Q}_T(\bar{A}_T, \bar{L}_T) \coloneqq \E(Y \mid \bar{A}_T,
\bar{L}_T)$. To estimate this parameter, one first fits each sequential
regression $\overline{Q}_t$ in reverse time order, and then similarly estimates
the Riesz representers. Then, each sequential regression is updated by
constructing a univariate fluctuation model, each of whose outputs depend on
the previous regression through the offset. Finally, predictions are produced
from each fluctuated sequential regression to construct the final plug-in
estimator $\psi_n$.

\begin{algorithm}
    \caption{Riesz TMLE}
    \label{alg:riesz-tmle}
    \begin{enumerate}
        \item Fit sequential regressions $\overline{Q}_T, \ldots,
          \overline{Q}_1$ and Riesz representers $\alpha_1, \ldots, \alpha_T$.
        \item For $t = 1, \ldots, T$, compute the weights
          $\omega_t(\bar{A}_{t}, \bar{L}_{t}) = \prod_{k=1}^t
          \alpha_k(\bar{A}_{k}, \bar{L}_{k})$.
        \item Set $h_T(\bar{A}_{T}, \bar{L}_{T};
          \overline{Q}_{T, \hat{\varepsilon}_{T}}) = Y$
        \item For $t = T-1, \ldots, 1$, fit 1-D parametric model
          $\overline{Q}_{t, \hat{\varepsilon}_t}$ that regresses
          $\text{link}[h_{t+1}(\bar{A}_{(t+1)},\bar{L}_{(t+1)},
          \overline{Q}_{(t+1), \hat{\varepsilon}_{(t+1)}})] =
          \text{link}[\overline{Q}_{t}(\bar{A}_{t}, \bar{L}_{t})] +
          \varepsilon_{t}\omega_t$ and set $\overline{Q}_{t} =
          \overline{Q}_{t, \hat{\varepsilon}_t}$
        \item The final TMLE update is the substitution estimator constructed
          using the plug-in
          $h_1(\bar{A}_{1}, \bar{L}_{1};\overline{Q}_1^\star) =
          \text{link}[h_1(\bar{A}_{1}, \bar{L}_{1};
          \overline{Q}_{1, \hat{\varepsilon}_1})]$, which takes the form
          \begin{equation*}
            \psi_n = \frac{1}{n}\sum_{i=1}^nh_1(\bar{A}_{1}, \bar{L}_{1};
              \overline{Q}_1^\star)
          \end{equation*}
\end{enumerate}
\end{algorithm}

 In this TMLE procedure, at each step, the score equation $L_t$ is solved, meaning it
 solves the entire recursive estimating equation. It is doubly-robust in the
 sense that consistency follows from consistent estimation of \textit{either}
 all the nuisance estimates of $\bar{Q}_1, \ldots, \bar{Q}_T$ \textit{or}
 all the Riesz representers $\alpha_1, \ldots, \alpha_T$, similar to the
 procedure of~\cite{vanderLaan2012} or that of~\cite{Diaz2021}. Alternatively,
 one could also use a common fluctuation coefficient $\varepsilon$, as proposed
 in~\cite{vanderLaan2012}, and iteratively update it, but this may require more
 computational resources. Since $\prod_t\alpha_t$ is itself a Riesz
 representer, one can also estimate $\omega_t$ directly to improve stability,
 rather than multiplying many potentially large estimates together.

An alternative strategy can achieve the slightly stronger condition of
\textit{sequential} double-robustness, which only requires that either
$\overline{Q}_t$ \textit{or} $\alpha_t$ to be estimated consistently at a given
time step. This condition is explored by~\cite{Luedtke2017} and discussed
by~\cite{Diaz2021}. It can be achieved by fitting a fluctuation
\textit{function} $\hat{\varepsilon}(\bar{A}_t, \bar{L}_t)$ at each step
instead of a single parameter in a one-dimensional parametric fluctuation
model. Following a nearly identical construction as Algorithm 4
of~\cite{Luedtke2017}, we can define a TMLE that is sequentially doubly robust
as in Algorithm~\ref{alg:seq-tmle}.
This follows similarly to
Algorithm~\ref{alg:riesz-tmle}, but instead of a
one-dimensional parameter, $\varepsilon_t$ is now a \textit{function} estimated
non-parametrically via an arbitrary learning algorithm, and every time a
fluctuation is performed, one must update the $\varepsilon_t$, including for
all the prior time steps as well.

\begin{algorithm}
    \caption{Sequentially Doubly-Robust Riesz TMLE}
    \label{alg:seq-tmle}
    \begin{enumerate}
        \item Fit Riesz representers $\alpha_1, \ldots, \alpha_T$.
        \item Set $h_T(\bar{A}_T, \bar{L}_T; \overline{Q}_{T}) = Y$
        \item For $t = T-1, \ldots, 1$:
        \begin{enumerate}
            \item Initialize $\overline{Q}_{s}^{s+1}(\bar{A}_t, \bar{L}_t) = 1/2$
            \item For $s = t, \ldots, 1$, fit a nonparametric regression model
              $\overline{Q}_{t, \hat{\varepsilon}_t}^s$ that regresses
              $\link[h_{t+1}(\bar{A}_{t+1}, \bar{L}_{t+1}; \overline{Q}_{t+1})]
              = \link[\overline{Q}_{t+1}^{s+1}(\bar{A}_t, \bar{L}_t)] +
              \hat{\varepsilon}_t^s(\bar{A}_t, \bar{L}_t)$ with weights
              $\prod_{u=s+1}^t \alpha_u$ 
              and then set
              $\overline{Q}_{t}^s = \overline{Q}_{t, \hat{\varepsilon}_t}^{s+1}$
            \item Set $\overline{Q}_{t, \hat{\varepsilon}_t} =
              \overline{Q}_{t, \hat{\varepsilon}_t}^1$.
        \end{enumerate}

        \item The final TMLE update is the substitution estimator constructed
          using the plug-in
          $h_1(\bar{A}_{1}, \bar{L}_{1};\overline{Q}_1^\star) =
          \text{link}[h(\bar{A}_{1}, \bar{L}_{1};
          \overline{Q}_{1, \hat{\varepsilon}_1})]$, which takes the form
          \begin{equation*}
            \psi_n = \frac{1}{n}\sum_{i=1}^nh_1(\bar{A}_{1}, \bar{L}_{i1};
            \overline{Q}_1^\star)
          \end{equation*}
\end{enumerate}
\end{algorithm}

Finally, we can consider a TMLE for parameters satisfying Theorem
\ref{thm:eif-twophase}, which we define formally in
Algorithm~\ref{alg:twostage-tmle}.
In this algorithm, the complete-data EIF is first computed on the
observed data, and is then learned as the outcome of a regression to predict on
the complete, unobserved data. This regression is fluctuated using the Riesz
representer of the sampling structure as the clever covariate. This method
differs from prior literature~\citep{Hejazi2020, Rose2011}: instead of
targeting a particular nuisance, such as the conditional probability of
remaining observed despite the coarsening process, one instead tilts the
conditional mean so as to be agnostic to the form of the Riesz representer.
Since only the outer regression is targeted, this may not entirely satisfy the
definition of a TMLE, but similar procedures can be constructed depending on
desired plug-in constraints (for example, see Section 4.2 of~\cite{Qiu2026} for
an alternative construction that also targets the inner regression).

\begin{algorithm}
    \caption{Two-phase sampling Riesz TMLE}
    \label{alg:twostage-tmle}
    \begin{enumerate}
        \item Estimate the complete-data EIF $\phi^{\text{uc}}(\P_n^X)(X)$ and
          fit both the regression $\overline{Q}_{\text{obs}}(X, \Delta, V) =
          \E[\phi^{\text{uc}}(\P_n^X)(X) \mid \Delta, V ]$ and the Riesz
          representer $\alpha(\Delta, V)$.
        \item Fit a one-dimensional parametric fluctuation model
          $\overline{Q}_{\text{obs}, \hat{\varepsilon}}$ that regresses
          $\text{link}[\phi^{\text{uc}}(\P_n^X)(X)] =
          \text{link}[\overline{Q}_{\text{obs}}(X, \Delta, V)] + \varepsilon
          \alpha(\Delta, V)$
        \item Construct a substitution estimator using the plug-in
          $\overline{Q}_{\text{obs}}^\star(X, \Delta, V) =
          \text{link}[\overline{Q}_{\text{obs}, \hat{\varepsilon}}
          (X, \Delta = 1, V)]$, which takes the form
        \begin{equation*}
          \psi_n = \frac{1}{n}\sum_{i=1}^n
            \overline{Q}_{\text{obs}}^\star(X, \Delta, V) \ .
        \end{equation*}
\end{enumerate}
\end{algorithm}

As TMLE estimators that solve the efficient score equation, the estimators
obtained from Algorithms~\ref{alg:riesz-tmle},
\ref{alg:seq-tmle}, and
\ref{alg:twostage-tmle} will be
consistent and asymptotically normal and achieve the semi-parametric efficiency
bound for $\psi_0$.

%%%%%%%%%%%%%%%%%%%%%%%%%%%%%%%%%%%%%%%%%%%%%%%%%%%%%%%%%%%%%%%%%%%%%%%%%%%%%%%
\section{Examples}\label{sec:examples}

The EIF given by Riesz representation in Theorem~\ref{thm:eif-recursive}
encompasses a wide range of common estimands,
especially in causal inference, to which our TMLE procedures may be applied. In
this section, we show how this EIF appears in a variety of common statistical
estimation problems.

\subsection{Treatment-specific mean and quantile}

Consider observing $i = 1, \ldots, n$ units of data $(L, A, Y)$, where $Y$
represents an outcome, $A$ a treatment of interest, and $L$ a set of
confounders. One of the most basic parameters in causal inference is the
treatment-specific mean:
\begin{equation*}
    \psi_0 = \E[\E(Y \mid A = a, L)]
\end{equation*}
\noindent Its EIF takes the form
\begin{equation*}
    \phi(\P)(O) = \frac{\1(A = a)}{\P(A = a \mid L)}\Big(Y - \E(Y \mid A, L)
    \Big) + \E(Y \mid A = a, L) - \psi \ ,
\end{equation*}
which can easily be seen to follow the pattern of
Corollary~\ref{thm:eif-condmean}
with nuisances
\begin{align*}
    \eta(X) = \E(Y \mid A, L) & ; &
    \alpha(X) = \frac{\1(A = a)}{\P(A = a \mid L)} \ .
\end{align*}

But, we can also use Theorem~\ref{thm:eif-basic} to
obtain results that extend beyond those obtained by previous literature; for
example, when $\eta$ is not a conditional expectation function. Now instead of
being interested in a treatment-specific mean, suppose one wanted to study a
treatment-specific \textit{quantile}, such as a
median~\citep{firpo2007efficient, diaz2017efficient, Sun2021}. The average
quantile effect can be expressed
\begin{equation*}
    E[Q^\tau(A = a, L)]
\end{equation*}
In this case, $\eta$ is the conditional $\tau$-quantile function $\eta(O) =
Q^{\tau}(A, L)$. Following Appendix B of~\cite{hines2022demystifying}, the EIF
of the conditional quantile function is
\begin{equation*}
  \frac{\delta_{A, L}}{d\P(A, L)}\frac{\tau - \1(Y > Q^\tau(A, L))}
  {d\P(Q^\tau(A, L) \mid A, L)} \ ,
\end{equation*}
so that the EIF of an average quantile effect $\E[h(O; Q^\tau)]$ is
\begin{equation*}
  h(O; Q^\tau) - \psi + \alpha(A, L)\left(\frac{\tau - \1(Y > Q^\tau(A, L))}
  {d\P(Q^\tau(A, L) \mid A, L)}\right) \ .
\end{equation*}
The above has a quite similar form to the EIF of the treatment-specific mean,
except it multiplies the Riesz representer not by the residuals of a
conditional mean regression, but by a scaled derivative of the ``pinball loss''
typically optimized to fit a quantile regression. 

\subsection{Longitudinal Treatment Regimes}

Efficient estimation for the longitudinal causal inference problem with a
sequence of discrete treatments has been described by~\cite{vanderLaan2012},
and an approach for continuous treatments by~\cite{Diaz2021}. In the discrete
treatment regime, we can consider as a parameter of interest the counterfactual
mean under treatment regime $\bar{a}$. This estimand is a recursively defined
sequence of conditional means identified from the g-formula:
\begin{equation*}
    \psi_0 = \E[\E[\ldots \E[\E[Y \mid \bar{A}_t,\bar{L}_t] \mid
      \bar{A}_t,\bar{L}_t] \ldots \mid A_1,L_1, L_0]] \ .
\end{equation*}
Let $\overline{Q}_{T+1}(\bar{a}_{T+1}, \bar{L}_{T+1}) = Y$ and a given step in
the sequence of nuisance regressions be $\overline{Q}_t(\bar{A}_t, \bar{L}_t) =
\E[\overline{Q}_{t+1}(\bar{A}_{t+1}, \bar{L}_{t+1}) \mid \bar{A}_t, \bar{L}_t]
$. The EIF is given $\phi(\P)(O) = \bar{Q}(a_1, L_1, L_0) - \psi_0 + \sum_{t =
1}^T D_t$ where
\begin{equation}
    D_t = \frac{\1(\bar{A}_{t} = \bar{a}_{t})}{g_t(\bar{a}_{t} \mid
      \bar{L}_{t})}\Big(\overline{Q}_{t+1}(\bar{a}_{t+1}, \bar{L}_{t+1}) -
      \overline{Q}_t(\bar{A}_{t}, \bar{L}_{t})\Big) \ ,
\end{equation}
with $g_t(\bar{a}_{t} \mid \bar{L}_{t}) = P(\bar{A} = \bar{a}_{t} \mid
\bar{L}_{t})$. From visual inspection, it is clear to see that this form
satisfies the form of Theorem~\ref{thm:eif-recursive}
with
\begin{align*}
    \alpha_t(\bar{A}_t, \bar{L}_t) &= \frac{\1(\bar{A}_{t-1} =
      \bar{a}_{t-1})}{g_t(\bar{a}_{t-1} \mid \bar{L}_{t-1})} \ .
\end{align*}
Here, each $\alpha_t(\cdot)$ represents the Radon-Nikodym derivative of the
counterfactual density with respect to the observed density.

One can also consider a \textit{modified treatment policy}. In this setting,
rather than a binary contrast between $A = 1$ to $A = 0$, we consider an
intervention that changes the natural value of $A_t$ to some ``intervened''
value $A_t^d$. \cite{Diaz2021} describe this setting, which is very similar;
the EIF is $\phi(\P)(O) = \overline{Q}(A_1^d, L_1, L_0) - \psi_0 +
\sum_{t = 1}^T D_t$ but this time,
\begin{equation}
    D_t = \frac{g_t^d(\bar{A}_{t} \mid \bar{L}_{t})}{g_t(\bar{A}_{t} \mid
      \bar{L}_{t})}\Big(\overline{Q}_{t+1}(\bar{A}_{t+1}^d, \bar{L}_{t+1}) -
      \overline{Q}(\bar{A}_{t}, \bar{L}_{t})\Big) \ ,
\end{equation}
where $g_t^d$ represents the density of $\bar{A}_{t}^d$. From visual
inspection, this is the same as the discrete case, but with a different Riesz
representer
\begin{equation*}
    \alpha_t(X) = \frac{g^d(\bar{a}_{t-1} \mid \bar{L}_{t-1})}{g(\bar{a}_{t-1}
    \mid \bar{L}_{t-1})} \ .
\end{equation*}
Hence, despite being composed of complex sequences of nested parameters, the
EIFs of relevant estimands in the longitudinal setting are easily obtained.

\subsection{Causal Mediation Analysis}

In causal mediation analysis, one observes data units $O = (L, A, M, Y)$; the
goal is to analyze the effect of the treatment $A$ on the outcome $Y$ through
the effect of the mediating variable $M$. A common target estimand is the
``natural direct effect'' (NDE) \citep{Robins1992, pearl2001direct}, which
represents the contrast
\begin{equation*}
    \psi = \underbrace{\E(\E[\E(Y \mid A = 1, M, L) \mid A = 0, L])}_{\theta}
      - \underbrace{\E(\E(Y \mid A = 0, M, L))}_{\text{Counterfactual mean}}
\end{equation*}

For brevity we will consider the M-functional $\theta$, as the additional
contribution to the EIF from the counterfactual mean is already known. The EIF
of $\theta$ is given by \cite{TchetgenTchetgen2012} as the expression
\begin{align*}
    \phi_{\theta}(O) =& \frac{\1(A=1)}{\Pr(A=1 \mid L)}
    \frac{\P(M \mid A = 0, L)}{\P(M \mid A = 1, L)}
    \Big(Y - \E(Y \mid A = 1, M, L)\Big)\\
    & + \frac{\1(A=0)}{\Pr(A=0|L)} (\E(Y \mid A = 1, M, L) - \eta(1, 0, L)) \\
    & + \eta(1, 0, L) - \theta \ ,
\end{align*}
which uses the nuisance parameter shorthand
\begin{equation*}
    \eta(1, 0, L) = \E[\E(Y \mid  A = 1, M, L) \mid A = 0, L] \ .
\end{equation*}

Seeing how this EIF fits the form of Theorem~\ref{thm:eif-recursive}
requires re-expression using two key
identities. First, for any function $z(M)$, $\1(A=0) / \Pr(A=0 \mid L)$ can be
re-expressed under expectation like so
\begin{align*}
    \E\Big(\frac{\1(A=0)}{\Pr(A=0|L)}z(M)\Big)
    &= \int z(M)\frac{d\P(M = m, A = 0,  L = l)}{\Pr(A=0 \mid L = l)}\\
    &= \int z(M)d\P(M = m \mid A = 0, L)d\P(L = l)\\
    &= \int \E(z(M) \mid A, L)\frac{d\P(M = m \mid A = 0, L)}
      {d\P(M = m \mid A = a, L)}d\P(L = l)\\
    &= \E\Big(\frac{d\P(M = m \mid A = 0, L)}
      {d\P(M = m \mid A, L)}\E(z(M) \mid A, L)\Big) \ .
\end{align*}

Second, note that $\1(A = 1)z(1) = \1(A = 1)z(A)$. Using these two identities,
the EIF can be re-expressed
\begin{align*}
    \phi_{\theta}(O) =& \frac{\1(A=1)}{\Pr(A=1 \mid L)}
      \frac{\P(M \mid A = 0, L)}{\P(M \mid A, L)}
      \Big(Y - \E(Y \mid A, M, L)\Big)\\
      & + \frac{\P(M \mid A = 0, L)}{\P(M \mid A, L)}
        (\E(Y \mid A = 1, M, L) - \eta(1, A, L)) \\
      & + \eta(1, 0, L) - \theta \ .
\end{align*}

Since $\theta$ is a ``two-layer'' conditional expectation, its EIF has two
Riesz representers and two ``sub-EIFs''. From visual inspection, the Riesz
representers, following from change-of-measure, are
\begin{align*}
    \alpha_1(O) = \frac{\P(M \mid A = 0, L)}{\P(M \mid A = a, L)} & &
    \alpha_2(O) = \frac{\1(A=1)}{\Pr(A=1 \mid L)}
\end{align*}
and the learnable nuisances are 
\begin{align*}
    \eta_1(O_{1}) = \eta(1, A, L) & &
    \eta_2(O_{2}) = \E(Y \mid A, M, L)
\end{align*}

Hence, Theorem~\ref{thm:eif-recursive} holds
for mediation with the nuisances above, and can be estimated with
Algorithm~\ref{alg:riesz-tmle}. A similar strategy can
be employed when mediation is expanded to other settings. For example,
Theorem 1 of~\cite{Zheng2017} provides the interventional direct effect's EIF
for a longitudinal data structure, which can be cast in the form of our
Theorem~\ref{thm:eif-recursive} using the same
logic reviewed above. However, one must be careful, as some mediational
functionals, such as those defined under stochastic
interventions~\citep{Hejazi2022} may not be directly representable as sequences
of conditional expectations, necessitating a more careful application of
Theorem~\ref{thm:eif-basic} instead.

\section{Numerical Studies}\label{sec:simulation}

We implemented our proposed Riesz TMLE algorithms along with one-step
estimators in the \texttt{R} package \texttt{\{RieszCML\}}, available at
\url{https://github.com/nshlab/RieszCML}. The package supports one-step and
TMLE procedures, along with flexible nuisance function estimation via the super
learner algorithm~\citep{vdl2007super}.
% Our sequential Riesz TMLE algorithm and Theorem 2 naturally support our
% package's ability for estimation in both the cross-sectional and longitudinal
% setting, as well as supporting estimation of novel ``composed'' parameters.
A central design feature of the \texttt{\{RieszCML\}} package is the separation
of (i) the definition of the target parameter through a Riesz representer and
(ii) the estimation of nuisance components. In the following sections, we
present simulation experiments demonstrating comparable performance of our
proposed procedures to their previously described, more standard counterparts.
%The basic architecture of the \texttt{\{RieszCML\}} is illustrated in the
%example below. Riesz representers are defined as one-sided formula expressions
%that may use terms from the data and nuisance functions.

%\lstinputlisting{intro_to_RieszCML_listing1.R}

\subsection{Data Generating Process} 

To set up our simulation experiments, we consider a single time point setting
with a binary treatment $A$, covariates $L = (L_1, L_2, L_3, L_4, L_5)$, and a
binary outcome $Y$. The target parameter is the ATE, $\psi_0 =
\E[\overline{Q}(1, L) - \overline{Q}(0, L)]$, the risk difference between
treatment and control. Covariates are generated as
$$
L_1, L_2, L_3 \overset{i.i.d.}{\sim} \mathcal{N}(0, 1),
\quad  L_4 \overset{i.i.d.}{\sim} \mathrm{Bernoulli}(0.5),
\quad L_5 \overset{i.i.d.}{\sim} \mathrm{Uniform}(-1,1) \ .
$$
Treatment assignment follows a logistic model 
depending on main terms and interactions: 
$$
  \Pr(A = 1 \mid L) =
    \mathrm{expit}(-0.4 + 0.6 L_2 - 0.5 L_3 + 0.5 L_4 L_5 - 0.4 L_1 L_2) \ ,
$$
and the outcome regression is given by
$$
  \Pr(Y = 1 \mid A = a, L) = \mathrm{expit}(-0.8 + 0.9 a + 0.6 L_2 +
    0.8 L_4 L_5 + 0.7 a L_1 - 0.6 a L_2) \ .
$$
% This data-generating process deliberately includes interactions and omits some 
% covariates so that flexible machine learning methods such as the Super Learner
% algorithm are beneficial. % TODO add citation
The true value of $\psi_0$ and semi-parametric efficiency bound were
approximated via Monte Carlo integration with $2 \times 10^6$ draws from the
DGP.

\subsection{Comparison of Methods}

We compared the Riesz-TMLE to the standard TMLE method as implemented in the
\texttt{\{tmle3\}} package. Both applied a logistic fluctuation to the initial
plug-in estimator.

We considered sample sizes ranging from $n = 100$ to $6000$. For each sample, we
performed 2,000 Monte Carlo replications. Both methods were implemented using
super learning for the nuisance parameters. For each estimator, we calculated
estimated bias, coverage of 95\% Wald-style confidence intervals, mean squared
error (MSE), and asymptotic relative efficiency ($n$ times the MSE, divided by
the efficiency bound). We summarize the results in Figure
\ref{fig:convergence_diagnostics}. 

\begin{figure}[H]
    \centering
    \includegraphics[width=1\linewidth]{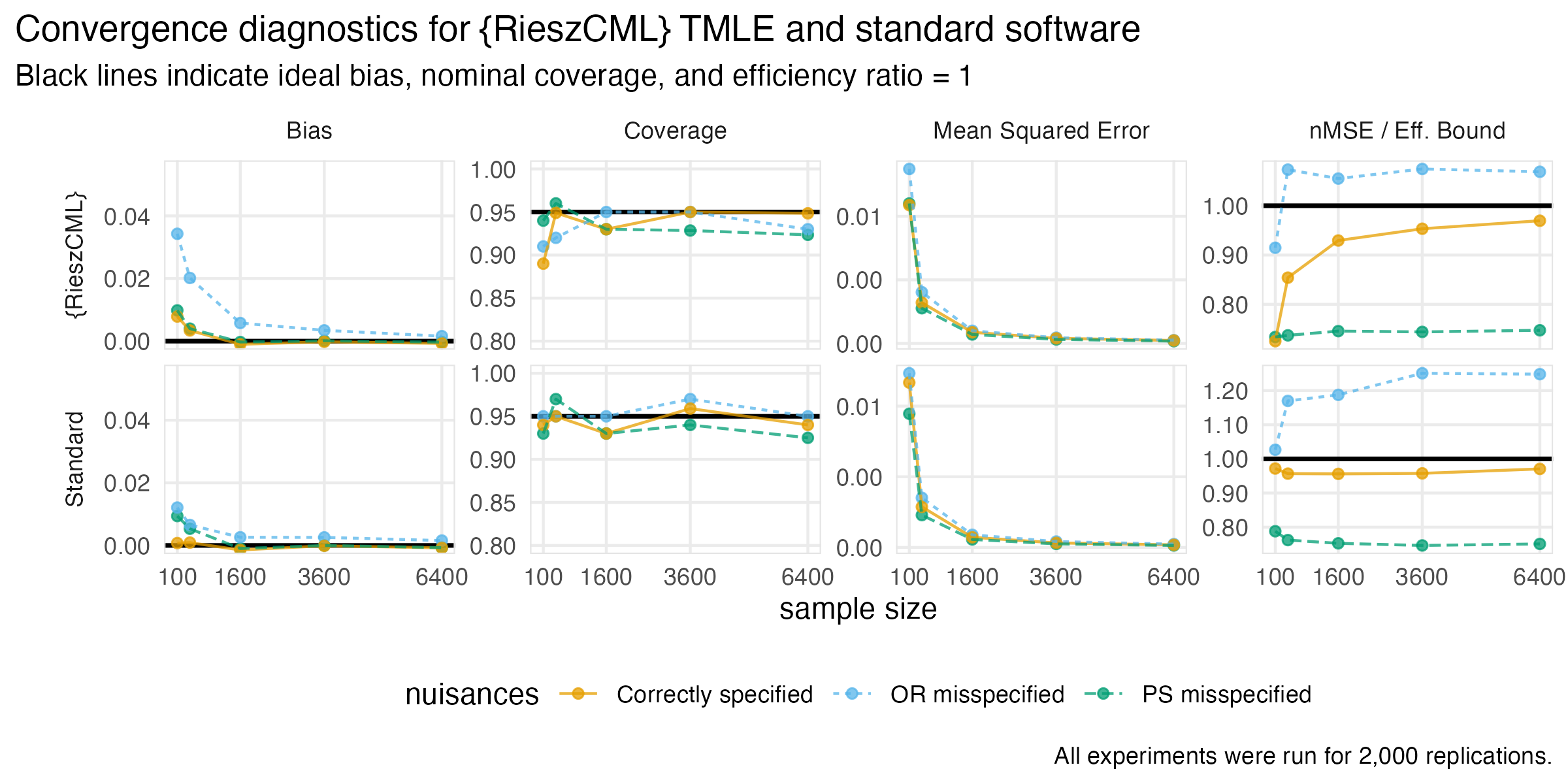}
    \caption{Convergence diagnostics for estimators of the ATE functional.
      Panels display bias, coverage, mean squared error (MSE) and $n \times
      \mathrm{MSE} / \mathrm{Efficiency\ Bound}$. Results are shown separately
      for the \texttt{RieszCML} and \texttt{tmle3} frameworks, and for three
      nuisance configurations: all correctly specified, outcome regression
      misspecified, and propensity score misspecified.}
    \label{fig:convergence_diagnostics}
\end{figure}

For the Riesz-based estimators, nuisance functions are estimated via
\texttt{nadir::super\_learner()}, while for \texttt{tmle3} we use
\texttt{sl3::Lrnr\_sl}. In both cases, the super learner library consisted of
the $\texttt{mean}$ learner (a learner that predicts a simple scalar
mean/average every time), $\text{GLM},\ \text{MARS},$ and $\text{glmnet}$,
combined using a non-negative least-squares metalearner.

To assess double robustness, we considered three nuisance specifications.
\textit{Correctly specified}: both $\overline{Q}_0(a, L)$ and $g_0(L)$ are
estimated using a flexible super learner with a library of learners that can
represent the true data generating process; \textit{outcome regression
misspecified}: $\overline{Q}_0(a, L)$ is replaced with a constant mean
estimator, while $g_0(W)$ is estimated flexibly as in the correctly specified
case; and \textit{propensity score misspecified}: $g_0(L)$ is replaced with a
constant mean estimator, while $\overline{Q}_0(a, L)$ is estimated flexibly as
in the correctly specified case.

% These results allow us to verify that the Riesz TMLE (like the standard TMLE)
% retains consistency whenever either nuisance component is correctly specified.
The Riesz TMLE and standard TMLE exhibit nearly identical performance across
all settings. In particular, all estimators show small bias that decreases with
sample size. Coverage is close to the nominal 95\% level across all sample
sizes and nuisance specifications. The MSE of both TMLE estimators closely
tracks the semi-parametric efficiency bound, and $n \times \mathrm{MSE}$
stabilizes as expected.

% Across all experiments, the Riesz TMLE performs comparably to the standard
% TMLE, demonstrating that the Riesz representation based construction yields a
% valid and efficient estimator in practice. Overall, the simulation study
% supports that the Riesz TMLE attains the same semi-parametric efficiency bound
% as the standard TMLE.

%%%%%%%%%%%%%%%%%%%%%%%%%%%%%%%%%%%%%%%%%%%%%%%%%%%%%%%%%%%%%%%%%%%%%%%%%%%%%%%
\section{Application to the HVTN 505 Trial}\label{sec:data-analysis}

To illustrate how our proposed approach can simplify the implmentation of
efficient estimation techniques, we examine a prior analysis reported
by~\cite{Hejazi2020} of data from the HIV Vaccine Trials Network's (HVTN) 505
trial~\citep{Hammer2013}. HVTN 505 (NCT00865566) was a phase 2b randomized,
placebo-controlled trial of an investigational vaccine regimen that enrolled
2504 HIV-negative participants, who were randomized one-to-one to receive an
active vaccine (DNA/rAd5) or placebo. A secondary aim of this study was to
investigate whether and the degree to which vaccine-induced immune responses
were associated with HIV infection by the end of primary follow-up (month 24).

Employing a two-phase case-control sampling design~\citep{Janes2017}, candidate
immune responses were measured for all HIV cases diagnosed between week 28 and
month 24 and five uninfected controls matched to each case based on BMI and
race/ethnicity,~\citep{Janes2017} and \cite{fong2018modification} analyzed
immune responses derived from blood drawn at the week 26 visit, both finding
CD4+ and CD8+ polyfunctionality scores to be associated with risk of HIV
infection status by month 24. \cite{Hejazi2020} developed new methodology based
on the modified treatment policy framework of~\cite{haneuse2013estimation} to
estimate the counterfactual risk of HIV infection under investigator-supplied
shifts in standardized polyfunctionality scores of the CD4+ and CD8+ immune
markers, adapting and expanding EIF-based corrections for two-phase sampling
previously introduced by~\cite{Rose2011}; their methods were implemented in the
\texttt{txshift} \texttt{R} package~\citep{hejazi2020txshift} and used to
define stochastic-interventional vaccine efficacy~\citep{hejazi2023stochastic,
gilbert2024four}, a vaccine efficacy measure used in immune correlates
analyses~\citep{huang2023stochastic}. The findings of this analysis generally
echoed those of~\cite{Janes2017} and \cite{fong2018modification}.

We reconsider the previously reported analysis of~\cite{Hejazi2020}, applying
our Riesz-based TMLE procedure, presented in Algorithm~\ref{alg:twostage-tmle},
to develop efficient estimators of the counterfactual mean of HIV risk under
hypothetical modifications of CD4+ and CD8+ polyfunctionality scores, doing so
as an application of Theorem~\ref{thm:eif-twophase}. In this study
context, the observed data unit is $O = (L,\Delta S, Y, \Delta)$, where
$\Delta$ is the indicator of selection into the phase-two sample and follows
\begin{equation*}
    \Pr(\Delta = 1 \mid Y, L) =
    \begin{cases}
      1, & Y = 1\\
      \pi(L), & \text{otherwise} \ ,
    \end{cases}
\end{equation*}
where $0 < \pi(L) < 1$ is the two-phase sampling probability for uninfected
controls, selected by~\cite{Janes2017} by a matching procedure. Our target
parameter is $\E[\Pr(Y = 1 \mid S + \delta, L)]$, the population-level risk of
HIV infection when standardized CD4+ or CD8+ polyfunctionality score $S$ is
modified from its natural value $S$ based on a modified treatment policy to
yield $S + \delta$. Applying Theorem~\ref{thm:eif-twophase}, we
obtain the EIF as
\begin{align*}
    \phi(\P)(O) =& \,\, \E\left(\E[Y \mid S + \delta, L] + \alpha_2(S, L)
      [Y - \E(Y \mid S, L)] \mid \Delta = 1, Y, L\right) - \psi_0 \\
    &+ \alpha_1(\Delta, Y, L) \E[Y \mid S + \delta, L] - \alpha_1(\Delta, Y, L)
      \E\left(\E[Y \mid S + \delta, L] \mid \Delta, Y, L\right) \\
    &- \alpha_1(\Delta, Y, L) \E\left(\alpha_2(S, L)[Y - \E(Y \mid S, L)] \mid
      \Delta, Y, L\right)\\
    &+ \alpha_1(\Delta, Y, L)\alpha_2(S, L)[Y - \E(Y \mid S, L)] \ ,
\end{align*}
where the Riesz representers are given by
\begin{align*}
    \alpha_2(S, L) &= \frac{d\P(S - \delta \mid L)}{d\P(S \mid L)} &
    \alpha_1(\Delta, Y, L) &= \frac{\1(\Delta = 1)}{\Pr(\Delta = 1 \mid Y, L)}
\end{align*}
While the form of the EIF appears involved, its representation in terms of
Riesz representers and regression nuisance functions is, to our knowledge,
novel; this representation facilitates the construction of estimation
procedures simplified by using Algorithm~\ref{alg:twostage-tmle}.

In our re-analysis, following~\cite{Hejazi2020}, we estimated the mean risk of
HIV infection across a range of candidate modifications to CD4+ and CD8+
standardized polyfunctionality scores. Estimates were produced using the TMLE
procedure outlined in
Algorithm~\ref{alg:twostage-tmle};
see
Figure~\ref{fig:data-analysis}.
Throughout, for parity, we re-use nuisance estimates computed
by~\cite{Hejazi2020}, including two-phase sampling probabilities and
conditional densities of polyfunctionality scores needed for the construction
first- and second-stage Riesz representers. When it is necessary to estimate
nuisance function, we use super learning~\citep{vdl2007super}; our weighted
ensemble of statistical learning models included generalized additive models,
random forest, gradient boosted trees, penalized and unpenalized GLMs with all
first-order interaction terms, and two variations of highly adaptive
lasso~\citep{van2017generally, hejazi2020hal9001}.

\begin{figure}[h]
    \centering
    \includegraphics[width=0.9\linewidth]{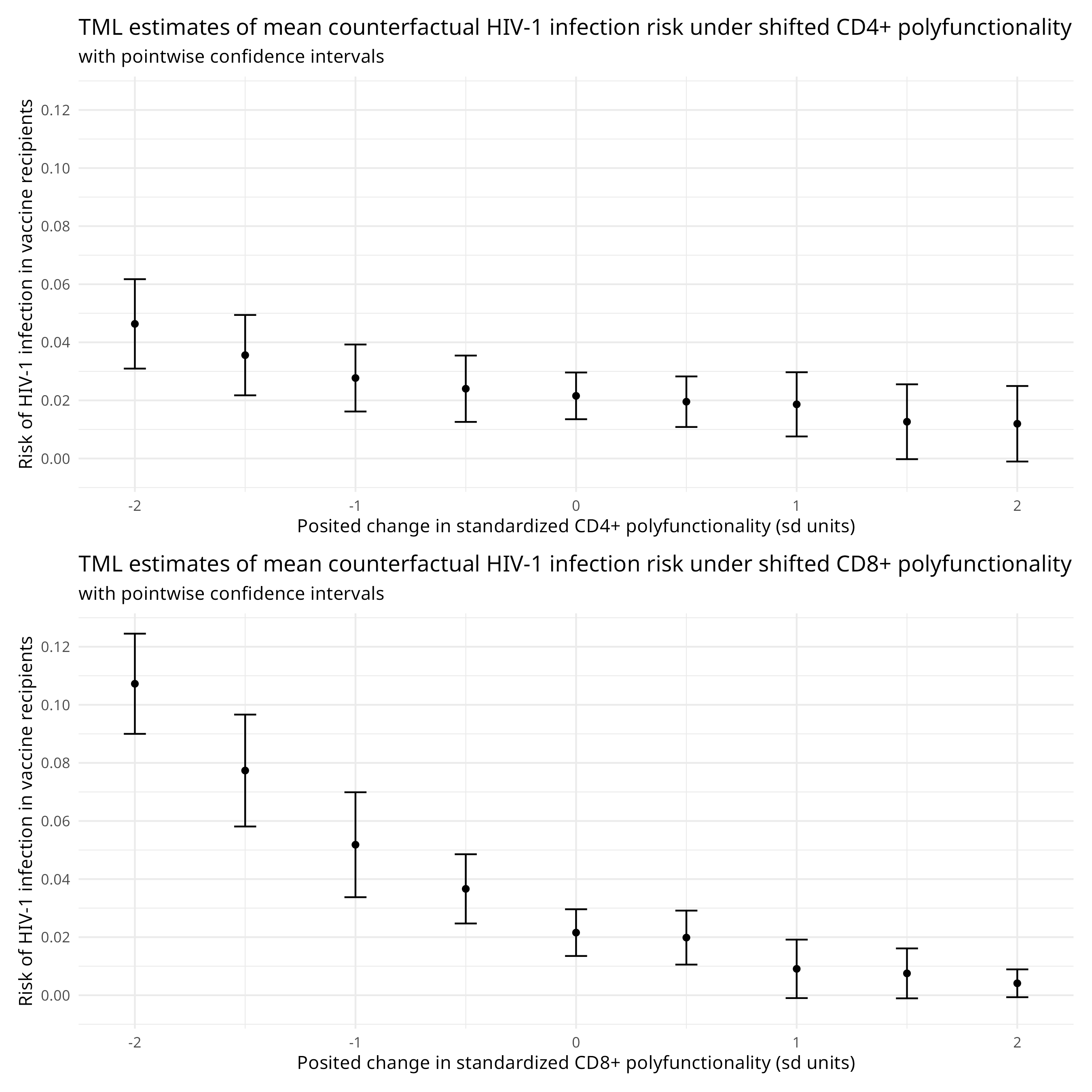}
    \caption{TML estimates of the counterfactual mean of HIV-1 infection in
    vaccine recipients under additive shifts in CD4+ (top) and CD8+ (bottom)
    standardized polyfunctionality scores, including pointwise Wald-style
    confidence intervals. These estimates compare favorably with the analogous
    estimates of Figure 2 in \cite{Hejazi2020}.}
    \label{fig:data-analysis}
\end{figure}

For hypothetical increases to polyfunctionality scores, our analysis identifies
a downward trend, suggesting that such increases would protect agaisnt HIV
infection risk further. By contrast, \cite{Hejazi2020} found that HIV infection
risk may increase with hypothetical increases to polyfunctionality scores. We
posit this difference to be due to instability in the estimation procedures
of~\cite{Hejazi2020}. Our analysis reaches the same conclusion as those
previously reported: that hypothetical reductions in CD4+/CD8+
polyfunctionality scores increase the risk of HIV infection. 

% It does deviate slightly from the weak upward trend in HIV risk at higher
% polyfunctionality scores observed by , but we posit that our method, by
% targeting a conditional expectation instead of a conditional sampling
% probability, actually serves to more effectively correct possible instabilities
% in the estimates of this previous work.

%%%%%%%%%%%%%%%%%%%%%%%%%%%%%%%%%%%%%%%%%%%%%%%%%%%%%%%%%%%%%%%%%%%%%%%%%%%%%%%
\section{Discussion}\label{sec:discussion}

We introduced a framework for constructing asymptotically efficient estimators
of statistical functionals that can be represented  as a sequence of one or
more linear functionals of nuisance functions share an efficient influence
function (EIF). The underlying representation of the EIF permits construction
of a targeted minimum loss-based estimators (TMLEs) based on a sequential
regression formulation. Our proposed work unifies efficient estimation across
a variety of settings, including causal mediation analysis, quantile effects,
two-phase sampling corrections, and longitudinal data subject to time-varying
treatment-confounder feedback.

A practical advantage of our theoretical framework is that it facilitates the
development of open-source statistical software that remains applicable across
a wide variety of problem settings. We implemented our proposed TMLE algorithm
for estimands with Riesz-based EIFs in the \texttt{\{RieszCML\}} \texttt{R}
package. An avenue of future investigation is the continued development of
tools that simplify the implementation of efficient estimators. A second
advantage of our focus on Riesz-based EIFs is that the Riesz representer itself
need not necessarily be directly estimated, as proposed by
\cite{Chernozhukov2022}, who suggest estimating the Riesz representer
``automatically'' using knowledge of $h$, by optimizing a tailored loss
function---an approach termed \textit{Riesz regression}---which can result in
more stable estimates when the Riesz representer takes the form of an inverse
probability weight or density ratio. By simplifying use of the EIF for
efficient estimation, and making its derviation more accessible in complex
settings, our work expands the set of tools for the construction of TMLE
algorithms and can be used to spur the application of targeted learning to new
applied science areas more readily.

%%%%%%%%%%%%%%%%%%%%%%%%%%%%%%%%%%%%%%%%%%%%%%%%%%%%%%%%%%%%%%%%%%%%%%%%%%%%%%%
%\section*{Supplementary material}

%Supplementary material available online includes proofs of presented results.

\subsection*{Acknowledgements}

SVB acknowledges support from the National Science Foundation (award no.~DGE
2140743).

%%%%%%%%%%%%%%%%%%%%%%%%%%%%%%%%%%%%%%%%%%%%%%%%%%%%%%%%%%%%%%%%%%%%%%%%%%%%%%%
\bibliography{refs}
\end{document}